\newtheorem{observation}[theorem]{Observation}
\renewcommand{\paragraph}[1]{\medskip\noindent{\bf #1}.\ }
\newcommand{\remove}[1]{}
\title{Locally Solvable Tasks \\ and the Limitations of Valency Arguments}
\titlerunning{Locally Solvable Tasks}
\author{Hagit Attiya}{Computer Science Department, Technion}{hagit@cs.technion.ac.il}{0000-0002-8017-6457}{Supported by ISF grant 380/18.}
\author{Armando Casta\~neda}{Instituto de Matem\'aticas, UNAM}{armando.castaneda@im.unam.mx}{}{Supported by UNAM-PAPIIT project IN108720.}
\author{Sergio Rajsbaum}{Instituto de Matem\'aticas, UNAM}{rajsbaum@matem.unam.mx}{0000-0002-0009-5287}{Supported by UNAM-PAPIIT project IN106520.}
\authorrunning{Attiya, Casta\~neda and Rajsbaum}
\keywords{Wait-freedom, Set agreement, Weak symmetry breaking, Impossibility proofs}
\begin{document}

\maketitle

% !TEX root = main.tex
\begin{abstract}
An elegant strategy for proving impossibility results
in distributed computing was introduced in
the celebrated FLP consensus impossibility proof.
This strategy is \emph{local} in nature as at each stage,
one configuration of a hypothetical protocol for consensus
is considered, together with future valencies of
possible extensions.
This proof strategy has been used in numerous situations related
to {consensus}, leading one to wonder why it has not been used
in  impossibility results of  two other well-known tasks:
\emph{set agreement} and \emph{renaming}.
This paper provides an explanation of why  impossibility proofs of these tasks have been of a global nature.
It shows that {a protocol can always solve such tasks locally},
in the following sense.
Given a configuration and all its future valencies, if a single
successor configuration is selected, then the protocol can
reveal all decisions in this branch of executions, satisfying
the task specification. 
This result is shown for both set agreement and renaming, implying that
there are no local impossibility proofs for these tasks.
\end{abstract}

% !TEX root = main.tex

\section{Introduction}

An elegant strategy for proving impossibility results
in distributed computing was introduced in
the celebrated FLP consensus impossibility proof~\cite{FLP85}.
This strategy is \emph{local} in nature as at each stage,
one configuration of a hypothetical protocol for consensus is considered,
together with its future \emph{valencies}, namely,
the  decisions the protocol may reach from this configuration.
To apply it, one needs to consider only the interactions of pending
transitions at the configuration,
and analyze their commutativity properties.
This local nature makes the strategy very powerful and flexible, and has therefore
 been used in numerous situations related
to consensus (e.g.,~\cite{AT99,Aspnes1998,ACHP18,AttiyaC2008,ALS94,DDS87,H91,Her91,KeidarR03,LoH00,LAA87,MosesR2002}).

For this reason,
it would be desirable to be able to use a local strategy, in the style of FLP, to
prove impossibility results for two other important tasks:
\emph{$k$-set agreement}~\cite{Cha93}, an extension of consensus,
where processes may decide on up to $k$ different values,
and \emph{$M$-renaming}~\cite{ABDPR90},
where processes must pick distinct names from a given namespace of size $M$.
Existing impossibility proofs for these tasks
(e.g.,~\cite{AttiyaC2013,AttiyaP2016,BG93b,CR10,HerlihyRadv2013,HS99,SaksZ2000})
are based on topological invariant properties of final configurations of a protocol,
which are \emph{global} in nature,
namely, all final configurations are analyzed together to argue that
there is no protocol for the task.
For consensus,
%SR: the global arguments show it, not FLP
%the FLP proof shows that
these configurations are connected,
in the graph-theoretic sense. For set agreement and renaming,
higher-dimensional connectivity properties are proved.
%SR: separated in two sentences, to make clear "these" refers to SA and renaming:
Researchers have wondered why only global impossibility
proofs have been used for these tasks~\cite{AlistarhAEGZ2019}.

This paper provides an explanation of why the impossibility proof
strategies for set agreement and renaming  have been of a global nature.
It shows that one could not hope to prove that set agreement
and renaming are unsolvable through a local argument,
since \emph{they are solvable in a local sense}.
For a configuration $C$ of the protocol,
we denote by $\chi(C)$ all its successor configurations.
In a local FLP style of argument,
one selects a configuration $C'\in \chi(C)$,
based on the valencies of the configurations in $\chi(C)$.
The observation is that valencies can be assigned to $\chi(C)$,
 %SR: this may give the impression that we do something different from FLP:
 %{(instead of a single configuration as in the FLP proof)},
such that for any chosen configuration $C'\in  \chi(C)$,
the protocol can reveal decisions in all final configurations
extending $C'$, such that  the decisions are consistent
both with the valencies and with the task specification.
Intuitively, a hypothetical protocol for set agreement or
renaming can ``hide'' its errors, if one inspects it only locally.

Intuitively,
the reason that a protocol can do this for set agreement and renaming,
and not for consensus,
is that the consensus specification is one-dimensional in nature,
so one can ``corner'' the protocol to reveal a configuration violating
agreement (assuming the protocol terminates).
Formally, it is always possible to find a \emph{bivalent} configuration
for consensus,
and it is impossible to locally solve consensus from such a configuration.
For set agreement and renaming, the protocol can ``move'' its errors around,
on a higher dimensional space,
without being cornered,
even if the protocol declares all its valencies.

In more detail, given a hypothetical full-information protocol
for either set agreement or renaming, we introduce the notion of  \emph{valency task}
 for set agreement and  for renaming. The inputs to such a task are
 the configurations $\chi(C)$  of the  protocol
after $\ell$ rounds, $\ell \geq 1$  (one round after  some configuration $C$).
For each configuration $C'\in\chi(C)$,
%SR: this is one of the referees comments: is it the task or the protocol that determines vals?
there is a valency, $val(C')$, specifying the  outputs of the protocol  on executions starting
in $C'$. The valency task is thus defined together by both  $\chi(C)$ and the valencies.
 A protocol \emph{solves the valency task locally} in $m\geq 1$ rounds,
 if starting on any  $C'\in\chi(C)$, after $m$ rounds it produces decisions
 that are consistent with the task specification (either set agreement or renaming),
 and additionally  \emph{complete},
that is, if a value $v\in val(C')$, then at least one process decides $v$ in at least one execution
starting in $C'$.
This captures the notion that the values promised by valencies are indeed decided.

We present the notions of valency task and local solvability in
Section~\ref{sec:valTaskLocSolv},
and define valency tasks for set agreement, {renaming and weak symmetry breaking, a task that is equivalent to renaming}.
We show in Section~\ref{sec-almost-global} that for both valency tasks,
set agreement and {weak symmetry breaking},
for any $\ell\geq 1$, the valency task is locally solvable, in one round ($m=1$)
in the wait-free model.
{Then, {exploiting} a known reduction between renaming and weak symmetry breaking,
we derive locally solvable valency tasks for renaming.}
%Solving the valency task in  $m \geq 1$ rounds is clearly impossible
%because such a solution would solve set agreement (or renaming) on all inputs in $\ell+m$ rounds.
%However,  solving it {locally} is possible.
%That is, there is a protocol that solves the valency task in one round:
%starting on any $\ell$-round configuration $C'\in \chi(C)$ as initial state,
%executing one round, and producing decisions that satisfy the valency task specification,
%i.e., respecting the valencies and the specification of set agreement (or renaming).
%The notion of
This theorem implies our main result that there are no local proofs,
in the style of FLP, for set agreement and renaming,
as shown in Section~\ref{sec:locValImpProof},
where we present a precise notion of \emph{local impossibility proof}.
The techniques are based on  combinatorial topology arguments
explaining how a protocol can ``hide'' the inevitable mistakes it
must make in some final decisions.

The setting used is a round-based wait-free model, where
$n$ asynchronous processes communicate reading and writing shared variables.
Since the model is wait-free,
the impossibility results are related to $k$-set agreement,
$k=n-1$,  and $M$-renaming, $M=2n-1$.
Working in a round-based model facilitates identification of consistent
layers of configurations, and talking about $\ell$-round configurations.
Considering wait-free executions allows to assume the hypothetical protocol
decides always after some number of rounds, $R$.
The significance of these specific cases and the choice of the model
is further discussed in Section~\ref{section:them},
which also explain the relation of our results to the approach of
Alistarh, Aspnes, Ellen, Gelashvili and Zhu~\cite{AlistarhAEGZ2019},
the first paper that has considered this question,
which showed that \emph{extension-based} techniques do not suffice
for proving the impossibility of solving set agreement.

% !TEX root = main.tex

\section{Model of Computation and Its Topological Interpretation}
\label{sec:model}

The model we  consider is a standard shared-memory system
with $n \geq 2$ asynchronous wait-free processes,
$P_0, \hdots, P_{n-1}$, communicating by atomically
reading and writing to shared variables.

\paragraph{The IIS model}
A \emph{protocol} specifies, for each process,
the steps to perform in order to solve a task.
We consider an \emph{iterated immediate
snapshot} (IIS)~\cite{iteratedRajsbaum10}
model of computation in which the protocol proceeds in
a sequence of asynchronous \emph{rounds}.
In each round $r \geq 1$, a process performs an
\emph{immediate snapshot} (IS) operation
on a clean shared array $M[r]$.
The execution of an IS operation on $M[r]$ is described as a sequence
of \emph{concurrency classes},
i.e., non-empty sets of processes.
Each concurrency class indicates that the processes in the class first
write in $M[r]$ (in some arbitrary order) and then read all entries of $M[r]$
(in some arbitrary order).
Each process appears in exactly one concurrency class
for round $r$, namely, executes one IS, on each memory $M[r]$.

An \emph{execution} starting in $\sigma$ is defined by a sequence of
IS executions, one for each $M[r]$:
the sequence of concurrency classes on $M[1]$, followed by the
sequence of concurrency classes on $M[2]$,
and so on.
Since processes access a clean memory $M[r]$ in every round $r$,
IIS executions can be equivalently defined as a sequence of
concurrency classes with the property that,
for each concurrency class $C$,
the processes in it perform the same number of IS operations
in the concurrency classes preceding $C$.
This means that all of them are poised to perform an IS operation
on the same $M[r]$.

A \emph{configuration} of the protocol
$\sigma=\{ (P_0, v_0),\ldots,(P_{n-1},v_{n-1})\}$
consists of the local state  $v_i$ for each process $P_i$,
during an execution.
Notice that the states of the processes define the
values assigned to the entries of $M[r]$.
In an \emph{initial configuration} $\sigma$,
each process of $\sigma$ is in an
initial state determined by its input  value (and its id),
and all shared variables hold their initial value.
A \emph{partial} configuration of a configuration $\sigma$
is a subset of $\sigma$.

\paragraph{Tasks}
A \emph{task} $T = (I, O, \Delta)$ is specified by a set of input
assignments $I$ to the processes participating in an execution,
a set of possible output assignments $O$ to the participating processes,
and a mapping $\Delta : I \mapsto 2^O$ specifying
the allowable outputs for each input assignment.
A protocol \emph{solves a task $T$} if in every execution starting
in any initial configuration $\sigma\in I$,
every participating process of $\sigma$ decides an output value,
such that the output values
of the processes respect $\Delta$ for their input values.
The  \emph{safety} property is  that  the decisions of the processes
starting with inputs $\sigma \in I$ define an output simplex $\tau$,
such that $\tau\in \Delta(\sigma)$.
The \emph{liveness} property is that the protocol is \emph{wait-free},
namely, a process does not take an infinite number of steps without deciding.

A task is solvable in the IIS model
if and only if is solvable in the  standard asynchronous read/write
model~\cite{BorowskyG1993,GafniR2010}.
When one is  interested only in computability (and not complexity),
the protocol may be assumed to be \emph{full-information}:
a process remembers everything, and always writes all the information it has.
Therefore, the protocol only needs to instruct a process
when to \emph{decide}, and on which output value.

The following  tasks are defined over a domain of possible inputs
$V= \{0,1,\ldots,n-1\}$.
For proving impossibility results, it suffices to assume that
a process $P_i$ starts with input $i$.

\begin{definition}
In the \emph{$k$-set agreement} task~\cite{Cha93}
processes decide on at most $k$ different values,
among the input values they have observed.
The case where $k = 1$ and $V = \{0,1\}$,
is the  \emph{binary consensus} task.
\end{definition}

\begin{definition}
In the \emph{$M$-renaming} task~\cite{ABDPR90} processes start with
distinct values from a large domain,
and decide on distinct values from a smaller domain $\{0, \ldots, M-1\}$.\\
In the \emph{weak symmetry breaking} task~\cite{GRH06} processes
decide values  in  $\{0,1\}$, such that
not all of them decide the same value.
\end{definition}

If there is a protocol solving $(2n-2)$-renaming then there is
a  protocol solving weak symmetry breaking~\cite{GRH06}.
Due to its simpler structure and equivalence to $(2n-2)$-renaming,
we study weak symmetry breaking instead of  studying renaming.

\paragraph{Topological Interpretation}
Since  protocols preserve topological invariants of
the model of computation,
and these invariants, in turn, determine which
tasks are solvable, it is convenient to describe
protocols in the topological model of distributed
computing~\cite{HerlihyKR2013}.

In this model, the inputs of a task form an \emph{input complex} $I$,
which is a family of sets closed under containment.
Each set in the family is called a \emph{simplex}.
An input simplex $\sigma\in I$ has the form $\sigma=\{ (P_i, v_i)\}$,
for some subset of processes $P_i$, denoted $ids(\sigma)$.
It indicates that process $P_i\in ids(\sigma)$ starts with input $v_i$.
The values  $v_i$ are taken from a universe $V$ of possible input values.
The \emph{facets} of $I$ are the simplexes of size $n$,
defining the initial configurations of the system.
(A \emph{facet} is a simplex that is not contained in another simplex.)
The output complex $O$ is defined similarly.

For each input simplex $\sigma \in I$,
a \emph{task} $T = (I, O, \Delta)$ specifies an output simplex
$\tau\in \Delta(\sigma)$, $\tau=\{ (P_i,v'_i)\}$.
This means that $P_i$ may decide $v'_i$,
in an execution starting with inputs defined by $\sigma$,
where the processes observe steps by processes in $ids(\sigma)$.

Consider tuples of the form $(P, view)$,
where $P$ is in $ids(\sigma)$ and $view$ is the state of $P$
after $\ell$ rounds of communication.
A configuration is a simplex, a set of such tuples,
specifying the states of the processes after $\ell$ rounds.
The set of all configurations starting in $\sigma$,
after some number of rounds $\ell$ (including the partial configurations),
defines the \emph{protocol complex} $\chi^\ell(\sigma)$.
The configurations of $\chi^\ell(\sigma)$ are the simplexes
of this complex.
For a partial configuration $\sigma'\subset \sigma$,
$\chi^\ell(\sigma')$ is the subset of $\chi^\ell(\sigma)$
corresponding to executions where the processes
of $ids(\sigma')$ see only immediate snapshots by themselves.

In our model, the topological invariant preserved is
that a full-information protocol subdivides the input complex.

The protocol complex is denoted $\chi^\ell(\sigma)$, since it turns out
that it is the $\ell$-th \emph{chromatic subdivision} of $\sigma$.
For example, when $n=3$, a configuration may be drawn as a triangle,
as seen in Figure~\ref{figure:examples}(left).
The figure depicts the subdivision obtained after one round,
$\chi(\sigma)$, for three processes ($p=$ black, $q=$ grey, $r=$ white),
starting in one input simplex~$\sigma$.
It describes the sequences of concurrency classes that led to four of its simplexes.
Notice that a partial configuration, $\sigma'\subset \sigma$, $|\sigma|=3$,
is depicted as a vertex (state of one process) or as an edge (state of two processes),
contained in the triangle~$\sigma$.
The subdivision $\chi^2(\sigma)$ is obtained by replacing each
triangle $\tau$ of $\chi(\sigma)$,  by $\chi(\tau)$, and so forth.

A task $T$ is \emph{solvable} in $\ell$ rounds if and only if there is
a \emph{simplicial map} $\delta$ from the $\ell$-th chromatic subdivision
$\chi^\ell(I)$ to $O$ that respects $\Delta$, i.e.,
for every $\sigma \in I$,
$\delta(\chi^\ell(\sigma))$ is a subcomplex of $\Delta(\sigma)$.
(A simplicial map sends vertices of one complex to vertices
of another complex, preserving simplexes.)

If the input complex is finite
(i.e., the universe $V$ of possible input values is finite),
it is well-known that there is an integer $R$,
such that processes always decide at
the end of the $R$-th round in a wait-free protocol.
(This follows directly from K\"onig's~Lemma.)

The dimension of the protocol complex,
as well as the input complex, is $n-1$.
(The \emph{dimension of a simplex} $\sigma$ is $|\sigma|-1$,
and the \emph{dimension of a complex} is the largest dimension
of any of its simplexes.)

The \emph{carrier}, $carr(\tau,\chi^{\ell}(\sigma))$,
is the smallest $\sigma'\subseteq \sigma$,
such that $\tau\in \chi^{\ell}(\sigma')$.
In the figure, for the two edges of $\tau$,
we have $carr(\tau',\chi^{\ell}(\sigma))=\sigma'$,
and $carr(\tau'',\chi^{\ell}(\sigma))=\sigma$.

A \emph{carrier map} $\Delta : I \mapsto 2^O$
sending each input simplex $\sigma \in I$
to a subcomplex $\Delta(\sigma)$ of $O$,
such that $\sigma\subseteq \sigma'$ implies
$\Delta(\sigma)\subseteq \Delta(\sigma')$. 
% !TEX root = main.tex

\section{Valency Tasks and Local Solvability}
\label{sec:valTaskLocSolv}

We introduce here the notions of {valency task}, and of locally solving such a task.
Together, these notions provide the basic step in an impossibility proof in the FLP style,
that will be formally defined in Section~\ref{sec:locValImpProof}.

As discussed above, both for set agreement and weak symmetry breaking,
one may consider, without loss of generality,
a single input configuration,  $\sigma=\{ (P_0,0),\cdots, (P_{n-1},n-1)\}$,
meaning that the initial local states of the processes differ only in their ids.
Thus,
the input complex $I$ consists of $\sigma$ together with each subset of $\sigma$.
For short, let $\sigma=\{ 0,\cdots, {n-1}\}$, and we
sometimes abuse notation and denote the input complex also by $\sigma$.

Now, assume by way of contradiction that there is a protocol $\cal P$
solving an unsolvable task $\cal T$ in  $R$ rounds, for some $R\geq 1$. Namely, the protocol complex is  $\chi^{R}(\sigma)$,
and each vertex $v=(p,view)$ of this complex corresponds to the state $view$ of a process $p$, based on which, $p$  produces an output,
after executing an IS on $M[R]$.
Solving the task means that the protocol determines a simplicial map $\delta$, a coloring of each vertex $v$ of  $\chi^{R}(\sigma)$
with a decision value, $\delta(v)$, by the map $\delta(v)=(p,out)$, in such a way that for any final configuration $\tau\in\chi^{R}(\sigma)$, the simplex of
decision values $\delta(\tau)$ belongs to $\Delta(\sigma)$.
Since the task is unsolvable, there is no such $\delta$.
Intuitively, a local proof  demonstrates a contradiction by pinpointing
a configuration $\tau$ of the protocol complex where the decisions do
not satisfy the task specification, through a local observation, as follows.

\subsection{Overview of the local solvability approach}
\label{sec-local-valency-proofs}

\begin{figure}
\begin{center}
\includegraphics[width=0.45\textwidth]{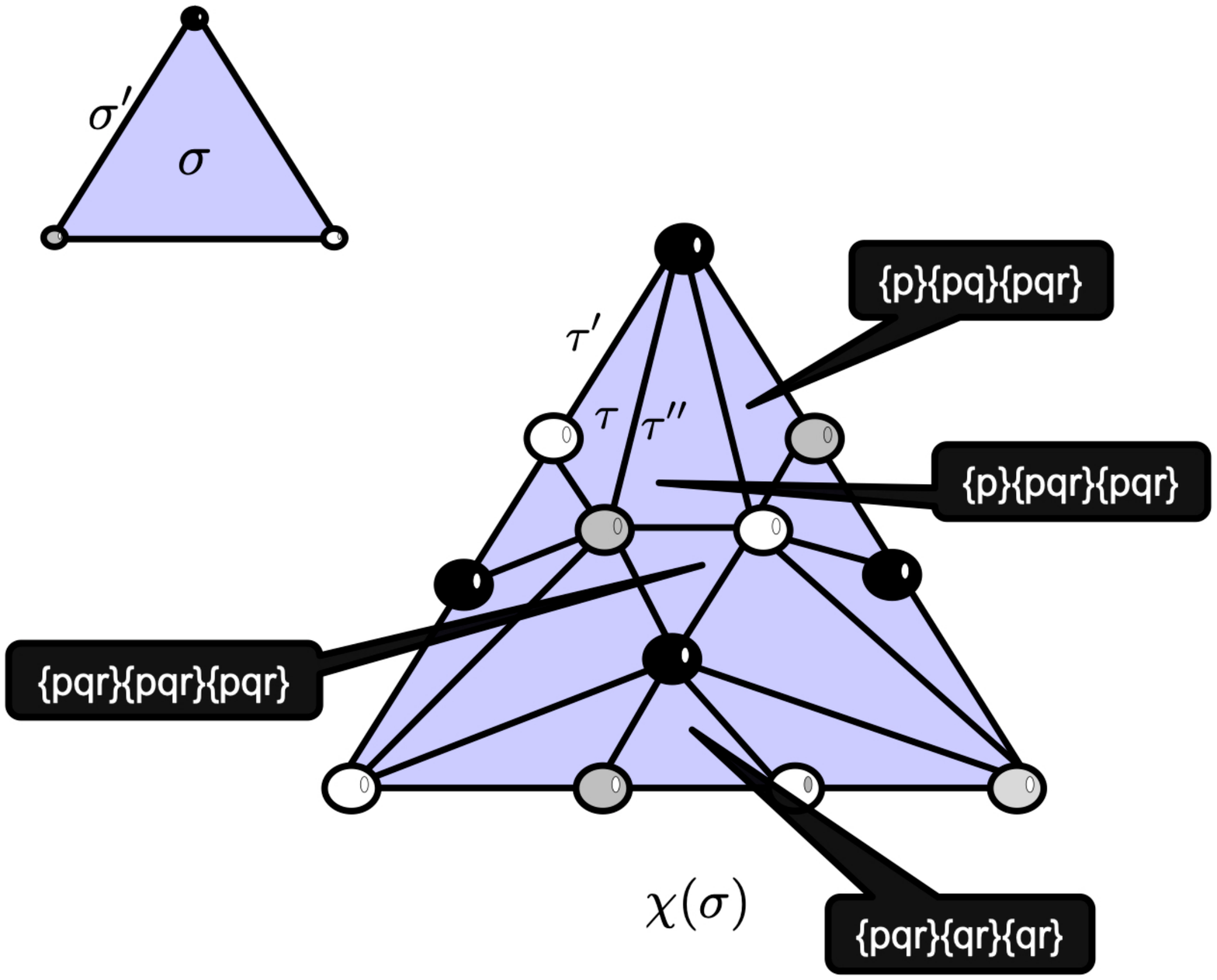}
\includegraphics[width=0.45\textwidth]{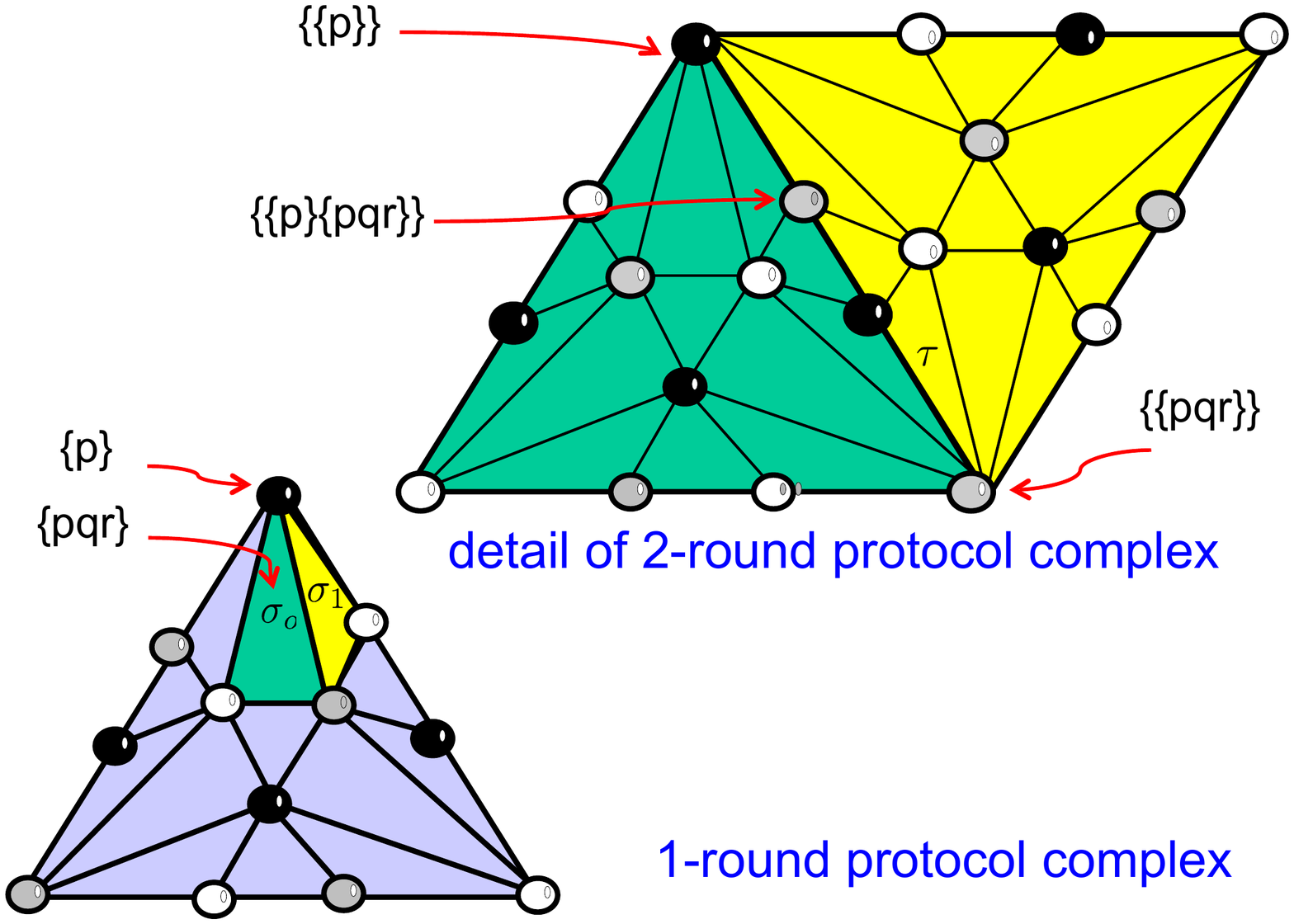}
\caption{Examples.}\label{figure:examples}
\end{center}
\end{figure}

Assume a protocol $\cal P$ solving the task in  $R=\ell+m$ rounds,
and
consider all the configurations after $\ell\geq 1$ rounds, $\chi^{\ell}(\sigma)$, and for each
configuration $\sigma'\in\chi^{\ell}(\sigma)$, the  \emph{valencies}, $val(\sigma')$ determined by $\cal P$.
Namely, for each value $v\in \textit{val}(\sigma')$,
there is a final configuration $\tau\in \chi^m(\sigma')$,
a successor of $\sigma'$ after $m$ rounds, such that
at least one process decides on the value $v$ in $\tau$.
The successor configurations of $\sigma'$ are all configurations after
$m$ additional rounds of computation
by processes in $\sigma'$,
namely, all simplexes in $\chi^m(\sigma')$.
Figure~\ref{figure:examples} (right) depicts the case of $\ell=m=1$.
The successor $\tau$ of $\sigma'$  is reached from the initial
configuration $\sigma$ in $\ell+m$ rounds.
Given $\chi^{\ell}(\sigma)$ and all the valencies of all these configurations,
the impossibility argument consists of selecting one $\sigma'\in \chi^{\ell}(\sigma)$. If there
are legal decisions $\delta_{\sigma'}$ for all final configurations extending  $\sigma'$,
then the impossibility argument did not succeed in finding a contradiction, because $\delta_{\sigma'}$ could be the map used by $\cal P$.
This is precisely what we show for set agreement and weak symmetry breaking: one can define valencies, such that
for any such $\sigma'$ there is a protocol  $\delta_{\sigma'}$ solving the task locally at $\sigma'$.
The protocol  $\delta_{\sigma'}$ colors the vertices
of $\chi^m(\sigma')$ after executing $m$ rounds starting in $\sigma'$
and satisfies  the task specification, and additionally,
the a priori  made commitments expressed
by $val(\sigma')$ (each $val\in val(\sigma')$ is indeed decided,
i.e., there is a vertex $(p,view)\in\chi^m(\sigma')$,
with  $\delta_{\sigma'}(p,view)=(p,val)$).
Thus, the protocol indeed preserves the valencies.

That is, an incorrect protocol can always hide the error locally.
Given that the task is unsolvable, an error must exist somewhere.
However, each particular configuration $\sigma'$ inspected looks fine,
and the error is moved elsewhere.
We stress that this holds for every $\ell \geq 1$ and $m=1$,
namely, even inspecting one round before the protocol terminates.

\subsection{There is no Locally Solvable Valency Task for Consensus}
\label{sec:consImp}

For consensus,
there is no way of defining a locally-solvable valency task.
This is indeed what is expected,
since there is a local impossibility proof for consensus.
We show that there is no way to assign valencies,
so that a protocol can hide its error.
We present the case where the hypothetical protocol solves consensus
in two rounds, $\ell+m=2$, but the general case is analogous.
(See Figure~\ref{figure:flp}.)
\begin{figure}
\begin{center}
\includegraphics[width=0.4\textwidth]{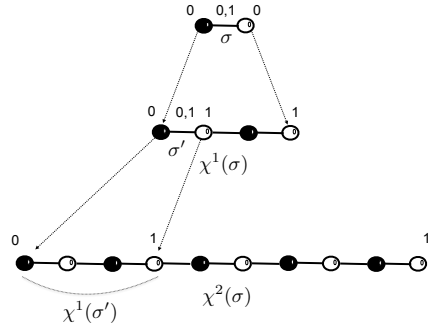}
\caption{Consensus is not locally solvable.}\label{figure:flp}
\end{center}
\end{figure}

Let $\sigma= \{ 0,1 \}$ be the input edge, and the task specification
$\Delta(\{0\})=\{(P_0,0)\}$, $\Delta(\{ 1\})=\{(P_1,1)\}$,
$\Delta(\{ 0,1\})=\{ \{ (P_0,0),(P_1,0)\},\{ (P_0,0),(P_1,1)\} \}$.
In terms of valencies,
for $i \in \{ 0,1 \}$, observe that $val(\{ i\}) = \{ i \}$ (for any $m$), because
$\chi^m(P_i, i)$ is the solo execution of $P_i$ with input $i$,
in which $P_i$ must decide $i$.
Thus, $val(\sigma) = \{0,1\}$ as $\chi^m(P_i, i) \subset \chi^\ell(\sigma)$.

Consider the complex $\chi^1(\sigma)$, which has the following edges:
$\{ (P_0, \langle 0 \rangle) , (P_1, \langle 0,1 \rangle) \}$,
corresponding to the execution in which $P_0$ goes first and then $P_1$;
 $\{ (P_1, \langle 0,1 \rangle) , (P_0, \langle 0,1 \rangle) \}$,
corresponding to the execution in which both processes run concurrently;
$\{ (P_0, \langle 0,1 \rangle) , (P_1, \langle 1 \rangle) \}$,
corresponding to the execution in which $P_1$ goes first and then $P_0$.
As explained above, $val(P_0, \langle 0 \rangle) = \{0\}$ and $val(P_1, \langle 1 \rangle) = \{1\}$, and
the valency of any other vertex of $\chi(\sigma)$,
$(P_0, \langle 0,1 \rangle)$ and $(P_1, \langle 0,1 \rangle)$,
is either $\{0\}$ or  $\{1\}$.
Thus, there must be an edge $\sigma' = (u,v) \in \chi^1(\sigma)$ among
the three edges with $val(u) = \{0\}$ and $val(v) = \{1\}$,
and hence $val(\sigma') = \{0,1\}$.
We pick such an edge $\sigma'$ and observe that consensus is not locally
solvable in  $ \chi^1(\sigma')$, i.e., the valency task with input $\sigma'$ and outputs $\chi^1(\sigma')$
with these valencies is not solvable. This is because any attempt to color the vertices of  $ \chi^1(\sigma')$,
with one endpoint of the path colored 0 and the other colored 1,
will produce an  edge $\tau$ whose vertices have different colors, violating the
agreement requirement of consensus.

We have seen that for consensus (1-set agreement) it is  impossible
to define a valency task that is locally solvable.
In Sections~\ref{sec:valTaskSA} and~\ref{sec:valTaskSB} we show how to
specify valency tasks for set agreement
and weak symmetry breaking that \emph{are} locally solvable,
and in Section~\ref{sec-almost-global} we describe protocols that solve them.

\subsection{Valency Tasks and Local Solvability for Set Agreement}
\label{sec:valTaskSA}

Consider now  the unique input simplex
$\sigma = \{0, \hdots, n-1\}$ for  $k$-set agreement, $k=n-1$.
Processes decide values from $\sigma$ that they have seen,
and such that at most $n-1$ different values are decided in an execution.

Following topology terminology, in the rest of the paper configurations are called simplexes.
First, recall that for a simplex $\tau \in \chi^\ell(\sigma)$,
the \emph{carrier}
of $\tau$ in  $\chi^\ell(\sigma)$ is  the smallest face $\sigma'\subseteq \sigma$,  such that
$\tau \in \chi^\ell(\sigma')$.
From an operational perspective, $carr(\tau, \chi^\ell(\sigma))$
identifies the set of  processes seen in the $\ell$-round
IIS execution that ends at configuration $\tau$.

The goal is to define, for each $\ell$,  a \emph{set agreement valency task} ${\cal T} = \langle \chi^{\ell}(\sigma), \sigma, val \rangle$.
This is a task that respects the set agreement specification:  a decided value should have been seen,
namely, a process deciding $v$ must have $v$ in its view.
Indeed, agreement tasks such as consensus and set agreement are specializations
of a  \emph{validity} task~\cite{CastanedaRR18jacm},  where this is the only requirement.

More formally, in a valency task for set agreement,  for every simplex $\tau \in \chi^{\ell}(\sigma)$,
$
val(\tau) \subseteq carr(\tau, \chi^{\ell}(\sigma)).
$
The set of inputs of $\cal T$ are the configurations at round $\ell$, namely $\chi^\ell(\sigma)$.
For each configuration $\tau \in \chi^\ell(\sigma)$,
the set of possible decisions $val(\tau)$ is a non-empty subset of~$\sigma$ (this is the standard
hypothesis of Sperner's lemma).
Notice that \textit{val} can be formally defined as a carrier map.\footnote{Formally,
the corresponding task specification $\Delta$,  for $\Delta(\tau)$, consists of all output simplexes labeled by output values
from $val(\tau)$.}
The following is a particular set agreement valency task.

\begin{definition}[Locally solvable set agreement valency task]
\label{def:locSolvSAtask}
For every integer $\ell \geq  1$, let ${\cal T}=\langle \chi^{\ell}(\sigma), \sigma, val \rangle$, where \textit{val} is
the carrier map defined by
\begin{enumerate}
\item If $|\tau| \leq n-2$, then $val(\tau) = ids(\tau)$,
\item else $val(\tau) = carr(\tau, \chi^{\ell}(\sigma))$.
\end{enumerate}
\end{definition}

In the notion of \emph{local solvability} of valency-tasks, we ask for a protocol that solves $\cal T$ in $m$ rounds,
namely a decision map $c_{\tau}: \chi^{\ell+m}(\sigma) \rightarrow \sigma$ that respects $\cal T$.
Thus, $c_{\tau}$ is a \emph{global} solution to $\cal T$, but
the $k$-set agreement task is solved only \emph{locally} at $\tau$: $c_{\tau}$ is determined by
 a specific input simplex $\tau \in \chi^\ell(\sigma)$, and
$c_{\tau}(\chi^m({\tau}))$ does not have any simplex with $k+1$ decisions.
Of course,  $c_{\tau}$ \emph{does not globally} solve $k$-set agreement because
indeed $c_{\tau}(\chi^{\ell+m}(\sigma))$ is a Sperner's coloring and has  at least one simplex colored with $n$ different decisions,
 by Sperner's lemma~\cite{spernerL}.
Recall that
a Sperner coloring $c: \chi^{\ell+m}(\sigma) \rightarrow \sigma$ is a simplicial map such that
$c(v) \in carr(\tau, \chi^{\ell+m}(\sigma))$, for every vertex $v$ of $\chi^{\ell+m}(\sigma)$.

We have that if $c_{\tau}$ solves $\cal T$ in $m$ rounds,
for each
$\tau'\in\chi^\ell(\sigma)$
and all configurations after $m$ rounds, $\chi^{m}(\tau')$,
it should hold that $c_\tau$ is \emph{consistent}, i.e., $c_\tau(\chi^{m}(\tau'))\subseteq val(\tau')$.
We require that $c_\tau$ is additionally
 \emph{complete}, meaning that every value committed by the valencies, is indeed
 decided, namely, $c_\tau(\chi^{m}(\tau'))= val(\tau')$.

\begin{definition}[Local solvability of $k$-set agreement]
We say that a set agreement valency task ${\cal T}=\langle \chi^{\ell}(\sigma), \sigma, val \rangle$
is $k$-\emph{locally solvable}
in $m \geq 1$ rounds if for every input simplex
$\tau\in \chi^\ell(\sigma)$ there is a decision simplicial map $c_\tau: \chi^{\ell+m}(\sigma) \rightarrow \sigma$
that is consistent and complete w.r.t. $\cal T$ and $c_\tau(\chi^m(\tau))$ does not have simplexes with
more than $k$ distinct decisions at its vertices.
\end{definition}

We stress that local solvability allows $c_\tau$ (which depends on $\tau$)
to have simplexes not in $\chi^m(\tau)$ with more than $k$ distinct decisions,
as it requires that $c_\tau$  solves $k$-set agreement only in $\chi^m(\tau)$.
Although it is unavoidable that there are simplexes with more than $k$ distinct
decisions \emph{somewhere} (due to the $k$-set agreement impossibility),
local solvability does not require that the task is globally unsolvable.
Indeed, while we prove (Section~\ref{sec:SAlocalSolving}) that
the valency task for set agreement is $(n-1)$-locally solvable in a single round,
we do not prove it is globally unsolvable.

\subsection{Valency Tasks and Local Solvability for Weak Symmetry Breaking {and $(2n-2)$-Renaming}}
\label{sec:valTaskSB}

The weak symmetry breaking {and $(2n-2)$-renaming tasks} with unique input $(n-1)$-simplex
$\sigma = \{0, \hdots, n-1\}$
require that output colorings on the boundary of
$\chi^\ell(\sigma)$
have the next symmetry property
(assuming the protocol terminates in $\ell$ rounds)
on the vertices $V(\chi^{\ell}(\sigma))$, e.g.~\cite{CR10, CR12, HS99}:

\begin{definition}[Symmetric {output} coloring]
\label{def-symmetric-outputs}
A \emph{symmetric {output} coloring} of $\chi^{\ell}(\sigma)$
is a simplicial map $b: V(\chi^{\ell}(\sigma)) \rightarrow {{\sf OutputSpace}}$ satisfying that,
for any two distinct proper faces $\sigma', \sigma''$ of $\sigma$
of the same dimension,
$v \in V(\chi^\ell(\sigma'))$ and $\phi(v) \in V(\chi^\ell(\sigma''))$
have the same {output} color, i.e. $b(v) = b(\phi(v))$,
where $\phi$ is the simplicial bijection between
$V(\chi^\ell(\sigma'))$ and $V(\chi^\ell(\sigma''))$ that maps
vertices \emph{preserving order}, namely, vertices with the smallest id in
$ids(\sigma')$ to vertices with the smallest id in $ids(\sigma'')$,
vertices with the second smallest id in $ids(\sigma')$ to vertices
with the second smallest id in $ids(\sigma'')$, and so on.
\end{definition}

{Recall that {any weak symmetry breaking or renaming} protocol can be
transformed into a \emph{comparison-based} protocol,
in which processes only perform comparisons between inputs~\cite{Herlihy2008}.}
Thus, actual input values are irrelevant,
and only the relative order among them matters.
In inputless {weak symmetry breaking or renaming},
$i \in \sigma$ denotes the process with $i$-th input,
in ascending order.

Output decisions in weak symmetry breaking are binary, hence in valency tasks for weak symmetry breaking
the carrier map \textit{val} goes from $\chi^\ell(\sigma)$ to $\{0,1\}$, the complex with a single edge, and its vertices.
Since \textit{val} models the valencies of a hypothetical protocol for weak symmetry breaking,
the valencies must be symmetric on the boundary; this is the only requirement \textit{val} must satisfy.
The following is a particular weak symmetry breaking valency task, where it is not hard to check that  \textit{val}
is indeed a carrier map.

\begin{definition}[Locally solvable weak symmetry breaking valency task]
\label{def:locSolvWSB}
For every  $\ell \geq  1$, let  ${\cal T}= \langle \chi^{\ell}(\sigma), \{0,1\}, val \rangle$
where \textit{val} is the  carrier map defined by
%$val: \chi^{\ell}(\sigma) \rightarrow \{0,1\}$
\begin{enumerate}
\item If $dim(\tau) \leq n-3$, then $val(\tau) = \{1\}$.
\item Otherwise, $val(\tau) = \{0,1\}$.
\end{enumerate}
\end{definition}

Analogous to  set agreement, if a symmetric binary coloring
$b_\tau: V(\chi^{\ell+m}(\sigma)) \rightarrow \{0,1\}$ solves $\cal T$
in $m$ rounds then it respects \textit{val}, or is  \emph{consistent} with $\cal T$.
This means that for every input simplex $\tau'\in \chi^\ell(\sigma)$,
$b_\tau(\chi^{m}(\tau'))\subseteq val(\tau')$.
We also require that it is \emph{complete},
i.e., $b_\tau(\chi^{m}(\tau'))=val(\tau')$.

It has been shown~\cite{AttiyaP2016,CR10} that if $dim(\sigma)+1$ is a prime power, then
$b(\chi^{\ell}(\sigma))$ has at least one \emph{monochromatic} simplex
(i.e. with all its vertices having the same binary color) of dimension $dim(\sigma)$,
which implies the impossibility of weak symmetry breaking;
those monochromatic simplexes are the errors that $b$ makes,
however, $b$ is able to hide them locally:
for the specified input simplex $\tau \in \chi^\ell(\sigma)$,
$b(\chi^m({\tau}))$ does not have monochromatic simplexes of dimension $dim(\sigma)$.

\begin{definition}[Local solvability of weak symmetry breaking]
We say that a weak symmetry breaking valency task
${\cal T} = \langle \chi^{\ell}(\sigma), \{0,1\}, val \rangle$  is \emph{locally solvable}
in $m \geq 1$ rounds if for every input simplex
$\tau\in \chi^\ell(\sigma)$ there is a symmetric binary decision map $b_\tau: \chi^{\ell+m}(\sigma) \rightarrow \{0,1\}$
(which is on function of $\tau$)
that is consistent and complete w.r.t. $\cal T$ and $b_\tau(\chi^m(\tau))$ does not have
monochromatic simplexes of dimension $dim(\sigma)$.
\end{definition}

In the next section, we prove that the weak symmetry breaking
valency tasks in Definition~\ref{def:locSolvWSB}
are locally solvable in one round.
This result is trivial when $dim(\sigma)+1$ is not a prime power because in those cases
weak symmetry breaking is indeed solvable~\cite{CR12}, and hence,
there is a symmetric binary coloring with no monochromatic simplexes
(i.e., without errors).
The interesting case in when weak symmetry breaking is not solvable
and unavoidable errors need to be hidden.

{Valency tasks and local solvability for $(2n-2)$-renaming are defined very similarly, 
{with} the only difference {being} that
valencies and outputs are taken from the set $\{1, 2, \hdots, 2n-2\}$ and local solvability requires that
no pair of vertices of the same simplex in $b_\tau(\chi^m(\tau))$ decide the same output name.
Later we derive $(2n-2)$-renaming locally solvable valency tasks from the weak symmetry breaking valency tasks in Definition~\ref{def:locSolvWSB}.}

% !TEX root = main.tex
\section{Solving Valency Tasks}
\label{sec-almost-global}

This section contains the proof of Theorems~\ref{theo-colorings-sa} and Theorem~\ref{theo-colorings-wsb},
stating that the set agreement and weak symmetry breaking valency tasks defined in the previous section,
Definitions~\ref{def:locSolvSAtask} and~\ref{def:locSolvWSB}, are locally solvable in one round.
{Using Theorem~\ref{theo-colorings-wsb}, it also proves Theorem~\ref{theo-colorings-ren} stating that there
exists $(2n-2)$-renaming valency tasks that are solvable.}

\subsection{Set Agreement}
\label{sec:SAlocalSolving}

The following theorem shows that the valency tasks for set agreement defined
in the previous section are locally solvable.
\begin{theorem}
\label{theo-colorings-sa}
For any $n\geq 3$ and $\ell \geq  1$,
the set agreement valency task ${\cal T}=\langle \chi^{\ell}(\sigma), \sigma, val \rangle$ in Definition~\ref{def:locSolvSAtask}
is $(n-1)$-locally solvable in one round.
\end{theorem}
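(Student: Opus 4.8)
The plan is to construct, for each input simplex $\tau \in \chi^\ell(\sigma)$, a Sperner coloring $c_\tau$ of $\chi^{\ell+1}(\sigma)$ that behaves differently on the ``inside'' $\chi^1(\tau)$ than on the rest of the complex. The key observation is that $\mathit{val}$ as defined in Definition~\ref{def:locSolvSAtask} only constrains small faces ($|\tau'|\le n-2$) to decide within $\mathit{ids}(\tau')$; on the full-dimensional simplexes it allows the entire carrier. So away from $\tau$ we may take $c_\tau$ to be any fixed Sperner coloring that is consistent and complete with $\mathit{val}$ — e.g.\ the ``standard'' coloring where a vertex $v=(p,\mathit{view})$ decides the value of the minimum-id (or first-written) process in its view, which is a valid Sperner coloring, is automatically consistent (it decides within the carrier), and one checks completeness of $\mathit{val}(\tau')$ on each $\chi^1(\tau')$ by exhibiting explicit immediate-snapshot executions realizing each promised value. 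This handles every $\tau'\neq\tau$ (and all faces), so the only real work is to recolor $\chi^1(\tau)$ so that no simplex there gets all $n$ values, while keeping completeness on $\chi^1(\tau)$ and staying Sperner (hence simplicial agreement on the boundary shared with the rest).

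First I would set up the one-round subdivision $\chi^1(\tau)$ of an $(n-1)$-simplex $\tau$ and recall its combinatorial structure: it is the standard chromatic subdivision, whose central simplex corresponds to the fully-concurrent execution where every process sees all of $\mathit{ids}(\tau)$, and whose boundary $\partial\chi^1(\tau)$ is $\chi^1(\partial\tau)$, matching what the outside coloring already does. The strategy inside is to choose $c_\tau$ on $\chi^1(\tau)$ to be a Sperner coloring of the subdivided $(n-1)$-simplex that avoids a rainbow (all-$n$-colors) simplex. Such colorings exist for a \emph{single} subdivision step: a concrete choice is to never use one particular color, say never decide the value of the process with the largest id in $\mathit{ids}(\tau)$ on any vertex of $\chi^1(\tau)$ whose carrier is all of $\tau$; more precisely, one colors each vertex $v$ of $\chi^1(\tau)$ by an element of its carrier chosen by a fixed rule that is constant (equal to the forced boundary value) on $\partial\chi^1(\tau)$ and uses at most $n-1$ distinct values overall. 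Since $\chi^1(\tau)$ has the same boundary coloring as the ambient $c_\tau$, gluing is automatic and $c_\tau$ remains a global simplicial (Sperner) map on $\chi^{\ell+1}(\sigma)$.

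The second ingredient is \emph{completeness on $\chi^1(\tau)$}: we must still decide every value in $\mathit{val}(\tau') = \mathit{carr}(\tau',\chi^\ell(\sigma))$ for $\tau'=\tau$, i.e.\ all of $\mathit{ids}(\tau)$ — but this seems to collide with ``at most $n-1$ colors inside.'' The resolution, and the point I expect to be the main obstacle, is that completeness is required of the map $c_\tau$ on $\chi^{1}(\tau')$ for each $\tau'\in\chi^\ell(\sigma)$, and for $\tau'=\tau$ the definition of $\mathit{val}(\tau)$ must be read carefully against Definition~\ref{def:locSolvSAtask}: if $|\tau|\le n-2$ then $\mathit{val}(\tau)=\mathit{ids}(\tau)$ has at most $n-2$ values and an $(n-2)$-or-lower-dimensional subdivided simplex can certainly be colored completely with no rainbow $(n-1)$-simplex (there are none); if $|\tau|=n-1$ (a facet of $\chi^\ell(\sigma)$), then $\mathit{val}(\tau)=\mathit{carr}(\tau,\chi^\ell(\sigma))$, which can itself be a proper face of $\sigma$ when $\tau$ lies on the boundary, again giving $\le n-1$ values; and only when $\tau$ is an \emph{interior} facet is $\mathit{val}(\tau)=\sigma$ with all $n$ values. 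In that last case one cannot avoid $n$ colors while being complete — so here the argument must instead ensure only that no \emph{single simplex} of $\chi^1(\tau)$ carries all $n$; with $n$ colors distributed over $\chi^1(\tau)$ we place the $n$-th color only on vertices far (in the subdivision) from any vertex carrying the other $n-1$, using that $\chi^1$ of an $(n-1)$-simplex has enough vertices and that adjacency in the chromatic subdivision is controlled by the ordered-set-partition structure of immediate snapshots. I would make this precise by an explicit rule: color $v=(p,\mathit{view})$ by $\min(\mathit{view})$ \emph{unless} $v$ is the unique vertex of some carefully chosen small sub-family (e.g.\ certain codimension-one vertices of the central simplex) whose neighbors in $\chi^1(\tau)$ are verified not to realize all other $n-1$ values, in which case color it by the missing value; then check, simplex by simplex of $\chi^1(\tau)$ (of which there are finitely many combinatorial types), that no facet is rainbow, that the boundary is untouched, and that every value in $\mathit{val}(\tau)$ appears. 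The genuinely delicate part is this last simplex-by-simplex verification in the interior-facet case, showing the ``extra'' color can be introduced without ever completing a rainbow simplex and without breaking simpliciality — that is where I'd expect the paper to do real combinatorial-topological bookkeeping on the structure of the single chromatic subdivision.
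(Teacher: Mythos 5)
Your high-level shape matches the paper's: color $\chi(\tau)$ specially so that no simplex is rainbow, extend to the rest of $\chi^{\ell+1}(\sigma)$ with a ``standard'' Sperner coloring, and check consistency/completeness against $val$. But there are two concrete gaps that keep this from being a proof.

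First, your claim that the min-of-view (or any such ``standard'') coloring outside $\chi(\tau)$ is automatically complete is false. Take an $(n-2)$-dimensional $\lambda \in \chi^\ell(\sigma)$ with $\mathrm{carr}(\lambda,\chi^\ell(\sigma))=\sigma$ that is disjoint from $\tau$: here $val(\lambda)=\sigma$, but every vertex of $\chi(\lambda)$ has its view contained in $ids(\lambda)$, so $\min(view)$ can only produce colors in $ids(\lambda)$ and the one missing process id is never decided. The paper explicitly hits this wall — with the even simpler choice $c_\tau(v)=id(v)$ outside $\chi(\tau)$ — and then performs a targeted fix-up: for each such $\lambda$, it re-colors one designated vertex in the central $(n-2)$-simplex of $\chi(\lambda)$ to the unique value in $\sigma\setminus ids(\lambda)$, and then re-checks that the resulting map is still Sperner and still consistent with $val$. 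Your proposal has no analogue of this step.

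Second, for the core case (an interior facet $\tau$ with $val(\tau)=\sigma$), you correctly identify the tension — all $n$ colors must appear in $\chi(\tau)$ yet no single $(n-1)$-simplex may be rainbow — but you leave the construction as a vague ``place the $n$-th color on vertices far from the others'' and a promised but undone simplex-by-simplex check. That check is exactly where the paper does real work (Lemma~\ref{lemma-color-sa} and Claim~\ref{claim-SA}), and its trick is the opposite of yours: rather than spreading colors apart, it makes the \emph{central} $(n-1)$-simplex of $\chi(\rho)$ entirely $0$-monochromatic, which reduces ``no rainbow $(n-1)$-simplex'' to the much more tractable condition that no $(n-2)$-simplex inside any boundary face $\chi(\rho_i)$ is colored exactly $\rho_0$. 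The claim then constructs, for each boundary face, a coloring of its central $(n-2)$-simplex that introduces the needed color $i$ while avoiding any $(n-2)$-simplex colored $\rho_j$ for a prescribed $j$ (taking $j=0$). It also distinguishes cases (2.a) vs.\ (2.b) of the lemma to respect the valency of a boundary $(n-2)$-face of $\tau$ whose carrier is a proper face of $\sigma$ — a case distinction your proposal mentions only in passing. Without the monochromatic-center reduction and that case split, the ``delicate simplex-by-simplex verification'' you defer is not merely bookkeeping; it is the proof.
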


The proof of Theorem~\ref{theo-colorings-sa} relies on the following lemma,
regarding vertex colorings of the first standard chromatic subdivision.
Roughly speaking, the lemma identifies colorings that, to some extent,
satisfy the properties of a Sperner coloring, but without simplexes with $n$ different decisions.
Figure~\ref{fig-example-sa} presents an example of these colorings.

\begin{figure}[t]
\begin{center}
\vspace{0.4cm}
\includegraphics[scale=0.4]{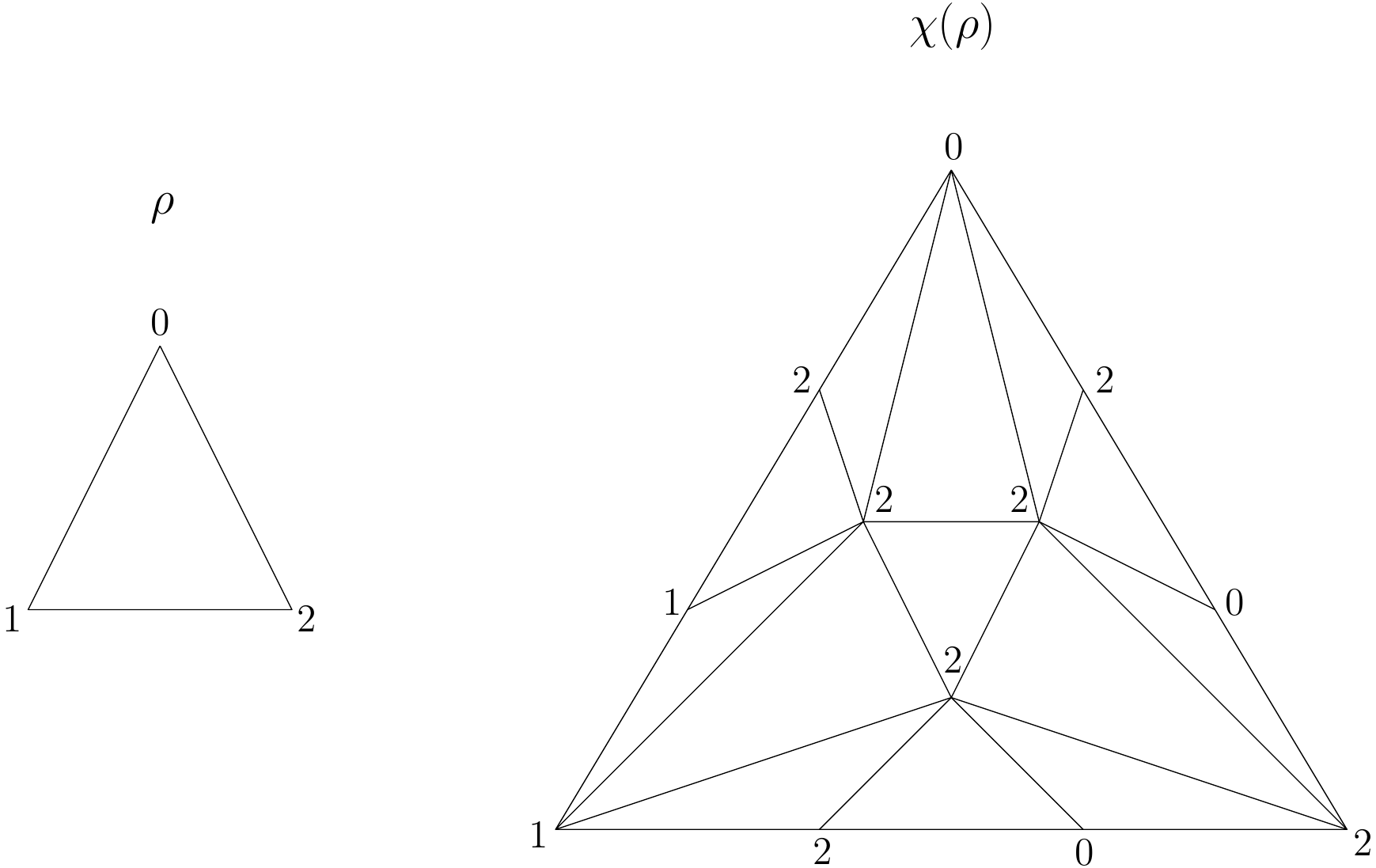}
\caption{\footnotesize Example for Lemma~\ref{lemma-color-sa}. % with $n=3$.
Since every vertex $v$ of $\rho$ has dimension $n-3$, the only vertex in $\chi(v)$ has color $v$, implying (1).
The coloring satisfies requirement (2.b) of the lemma: for the $(n-2)$-face $\{0,2\}$ of $\rho$,
all vertices in $\chi(\{0,2\})$ have colors in $\{0,2\}$, while for any other $(n-2)$-face $\rho'$ of $\rho$,
every vertex in $\chi(\rho')$ has a color in $\rho$. Finally, $\chi(\rho)$ has no $(n-1)$-simplex with the three colors at its vertices,
implying (3).
}
\label{fig-example-sa}
\end{center}
\end{figure}

\begin{lemma}
\label{lemma-color-sa}
Consider the $(n-1)$-dimensional simplex $\rho = \{0, \hdots, n-1\}$ with $n \geq 3$.
There is a coloring (simplicial map) $c: \chi(\rho) \rightarrow \rho$ such that:
\begin{enumerate}
\item for every $\rho' \subset \rho$ with $dim(\rho') \leq n-3$, $c(\chi(\rho')) = \rho'$,
\item one of the following holds:
\begin{enumerate}
\item for every $(n-2)$-dimensional face $\rho' \subseteq \rho$, $c(\chi(\rho')) = \rho$,
\item for a chosen $(n-2)$-dimensional $\rho'' \subset \rho$, $c(\chi(\rho'')) = \rho''$,
and for every other $(n-2)$-dimensional face $\rho' \subseteq \rho$, $c(\chi(\rho')) = \rho$,
\end{enumerate}
\item $c(\chi(\rho)) = \rho$ and there is no fully colored $(n-1)$-simplex in $c(\chi(\rho))$.
\end{enumerate}
\end{lemma}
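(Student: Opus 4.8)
The plan is to obtain $c$ as a small perturbation of the ``decide the smallest id you have seen'' rule, checking the three requirements by a direct analysis of one-round immediate-snapshot executions. I would start from the baseline coloring $c_0$ defined by $c_0(P_i, S) = \min(S)$, where $S$ is the view of $P_i$. If $v$ is a vertex of $\chi(\rho')$, its view is contained in $ids(\rho')$, so $c_0(v) \in ids(\rho')$, while each $v \in ids(\rho')$ is hit by the solo vertex $(P_v,\{v\}) \in \chi(\rho')$; hence $c_0(\chi(\rho')) = \rho'$ for every face $\rho' \subseteq \rho$, so $c_0$ already satisfies requirement (1) and the equality $c(\chi(\rho)) = \rho$ in (3). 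The crucial observation is that for an execution with ordered concurrency classes $C_1, \dots, C_t$ and views $V_s = C_1 \cup \dots \cup C_s$, the decisions taken by $c_0$ are exactly $\min V_1, \min V_2, \dots, \min V_t$, a non-increasing sequence; it consists of $n$ distinct values only when $t = n$, every $C_s$ is a singleton, and the minima strictly decrease, that is, only for the unique ``reverse-staircase'' execution $E^{*}$ with $C_s = \{\, n-s \,\}$. Thus $c_0$ has exactly one fully colored $(n-1)$-simplex.

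I would then repair $E^{*}$ and realize requirement (2) at once by recoloring a single vertex on each $(n-2)$-face $\rho_k := \rho \setminus \{k\}$: put $c(P_1, \rho\setminus\{0\}) = 0$, and for each $k \ne 0$ fix some $i_k \in \{1, \dots, n-1\}\setminus\{k\}$ and put $c(P_{i_k}, \rho\setminus\{k\}) = k$; everywhere else $c = c_0$. Every recolored vertex has view-size $n-1$, hence carrier of dimension $n-2$, so nothing on the $(n-3)$-skeleton changes and (1) still holds; and $c(\chi(\rho_k)) = ids(\rho_k) \cup \{k\} = \rho$ for every $k$, which is case (2.a). Case (2.b) for a prescribed $(n-2)$-face $\rho''$ is obtained by the same construction with the recoloring of $\rho''$ omitted and the baseline linear order chosen so that $E^{*}$ is still destroyed through one of the remaining faces. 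The decisive structural point for (3) is once more that every recolored vertex has view-size $n-1$: a full $(n-1)$-simplex can contain such a vertex only if its last concurrency class is a singleton $\{k\}$, in which case the recolored vertex it contains is the one belonging to $\rho_k$; thus each full simplex is affected by at most one recoloring and can be analysed face by face.

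The verification that no $(n-1)$-simplex becomes fully colored is the single genuinely nontrivial step, and I expect it to reduce to a short case analysis. If the last class of a full simplex has size $\ge 2$, then no process has view-size $n-1$, so $c = c_0$ there and at most $t \le n-1$ decisions appear. If the last class is $\{k\}$, then $P_k$ sees everyone and decides $\min(\rho) = 0$; for $k = 0$, whenever the recolored vertex $(P_1, \rho\setminus\{0\})$ occurs it too decides $0$, so $0$ is decided twice (and since $(P_1, \rho\setminus\{0\}) \in E^{*}$, in particular $E^{*}$ is no longer fully colored); for $k \ne 0$, since $0 \in \rho\setminus\{k\}$ every process whose view is $\rho\setminus\{k\}$ other than $P_{i_k}$ also decides $0$ and collides with $P_k$, unless the second-to-last class is exactly $\{i_k\}$, in which case the earlier classes partition $\rho\setminus\{i_k,k\}$, which still contains $0$, so one of those processes again decides $0$. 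In all cases the simplex shows at most $n-1$ distinct decisions, giving (3). The only delicate part of the write-up is the bookkeeping in this last case split — tracking which process furnishes the duplicated $0$ — while everything else is routine.
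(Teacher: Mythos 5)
Your proposal is a genuinely different construction from the paper's. The paper starts from the identity coloring $c(v)=id(v)$ on the $(n-3)$-skeleton, colors the central $(n-1)$-simplex of $\chi(\rho)$ uniformly with $0$, and then uses a separate claim to color the central $(n-2)$-simplexes of the $\chi(\rho_i)$ so as to avoid any $(n-2)$-simplex whose color set is $\rho_0$. You instead start from the ``minimum of the view'' Sperner coloring $c_0$, observe that it has \emph{exactly one} fully colored $(n-1)$-simplex (the reverse-staircase execution $E^*$), and repair it by recoloring a single interior vertex on each $(n-2)$-face. The identification of the unique bad simplex of $c_0$ is a clean structural observation that neither appears in nor is needed by the paper's proof, and your case analysis for (1), (2.a), and (3) is correct as written (the ``$P_0$ and $P_k$ both decide $0$'' collision could even be stated more uniformly, but your split by $C_{t-1}$ works).

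The one place your proposal falls short of a complete proof is (2.b). You say to omit the recoloring of $\rho''=\rho_{k''}$ and to ``choose the baseline linear order so that $E^*$ is still destroyed,'' but this gives a constraint only on the new smallest element $s$ of the order, not on the recolored-vertex selectors $i_k$. With a generic choice of $i_s$ the vertex $(P_{i_s},\rho\setminus\{s\})$ need not lie on the new reverse-staircase $E^{*\prime}$, whose second-to-last class is the singleton containing the \emph{second} $\pi$-smallest element; in that case no recoloring touches $E^{*\prime}$ and it stays fully colored. To close the gap you must additionally require $i_s$ to be the second $\pi$-smallest element and $i_k \neq s$ for all $k$, and then redo the collision analysis with $s$ playing the role that $0$ played in your (2.a) argument. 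Alternatively, you can sidestep the re-derivation entirely by handling only $\rho''=\rho_0$ with the modified construction (keep $\pi$ the reverse order so $s=n-1\neq 0$) and then, exactly as the paper does, obtain the general $\rho''=\rho_j$ by conjugating $c$ with the simplicial automorphism of $\chi(\rho)$ induced by a permutation of $\rho$ taking $0$ to $j$; this preserves (1), (3), and the fully-colored-simplex count automatically.
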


\begin{proof}
The proof of the lemma is based on the following claim.
In what follows, let $\rho_i$ denote the
$(n-2)$-face of $\rho$ without vertex $i \in \rho$.

\begin{claim}
\label{claim-SA}
For every $i, j \in \rho$,
there is a simplicial map $c: \chi(\rho_i) \rightarrow \rho$ such that:
\begin{enumerate}
\item for every face $\rho' \subset \rho$, for every vertex $v \in \chi(\rho_i)$, $c(v) = id(v)$,
\item there is a vertex $v \in \chi(\rho_i)$ with $c(v) = i$,
\item there is no $(n-2)$-simplex $\gamma \in \chi(\rho_i)$ with $c(\gamma) = \rho_j$.
\end{enumerate}
\end{claim}

\begin{proof}
Observe that the coloring of the boundary of $\chi(\rho_i)$ is already defined,
thus we just need to define the coloring for the vertices of the central $(n-2)$-simplex simplex
of $\chi(\rho_i)$, namely, the (unique) $(n-2)$-simplex $\rho$ such that each $v \in \rho$
has carrier $\rho_i$. Notice that at least one of the vertices of $\rho$ must have color $i$.

Consider first the case $i = j$.
For every $v \in \rho$, set $c(v) = i$.
Thus, no $(n-2)$-simplex $\gamma \in \chi(\rho_i)$ can have $c(\gamma) = \rho_i$
because every $(n-2)$-simplex of $\chi(\rho_i)$ has at least one vertex of $\rho$.

Consider now the case $i \neq j$. Pick a $k \in \rho_i$ distinct to $j$.
For the vertex $v \in \rho$ with $id(v) = k$, set $c(v) = i$,
and for every other vertex $u \in \rho$, set $c(u) = j$.
The only way an $(n-2)$-simplex $\gamma \in \chi(\rho_i)$ can have $c(\gamma) = \rho_j$
is that $\gamma$ contains the vertex $v \in \rho$ with $c(v) = i$, whose $id$ is $k$.
Simplex $\gamma$ has the form $\gamma \cup \{ v \}$, for an $(n-3)$-simplex $\gamma \in \chi(\rho_{ik})$,
where $\rho_{ik}$ is the face of $\rho_i$ without $k$.
By construction, $c(\gamma) = \rho_{ik}$ and then $\gamma$ has a vertex with color $j$
as $k \neq j$. The claim follows.
\end{proof}

We use Claim~\ref{claim-SA} to prove the lemma.
First, set $c(v) = id(v)$ for $v \in \chi(\rho')$,
where $\rho' \subset \rho$ with $dim(\rho') \leq n-3$.
Thus, we have $c(\chi(\rho')) = \rho'$.
Also, set $c(v) = 0$,
for every vertex $v$ of the central $(n-1)$-simplex of $\chi(\rho)$,
namely, the (unique) $(n-1)$-simplex with all its vertices having carrier $\rho$.

Note that it remains to color the vertices of
the central $(n-2)$-simplex of $\chi(\rho_i)$,
for every $(n-2)$-face of $\rho$.
Since the central $(n-1)$-simplex of $\chi(\rho)$ is $0$-monochromatic,
the only way to have a fully colored $(n-1)$-simplex in $\chi(\rho)$
is that, for some face $\rho_i$ of $\rho$, there is an $(n-2)$-simplex $\gamma \in \chi(\rho_i)$
with $c(\gamma) = \rho_0$.
This simplex together with one vertex of the central $(n-1)$-simplex
are an $(n-1)$-simplex of $\chi(\rho)$, which would be fully colored.
We use Claim~\ref{claim-SA} to complete the coloring of
$\chi(\rho)$, avoiding such $(n-2)$-simplexes.

There are two cases to consider to prove in part (2) of the lemma.
In the first one, (2.a),
for every face $\rho_i \subseteq \rho$, we should have $c(\chi(\rho_i)) = \rho$.
Claim~\ref{claim-SA} implies that the coloring $c$ of $\chi(\rho_i)$
can be extended so that $c(\chi(\rho')) = \rho$ and
there is no $(n-2)$-simplex $\gamma \in \chi(\rho_i)$ with $c(\gamma) = \rho_0$.
As explained above, this guarantees that there are no fully colored $(n-1)$-simplexes in $\chi(\rho)$.

In the second case, (2.b),
for a chosen $\rho_j \subset \rho$, $c(\chi(\rho_j)) = \rho_j$,
and for every other face $\rho_i \subseteq \rho$, $c(\chi(\rho_i)) = \rho$.
Let us chose first $\rho_j = \rho_0$.
For every vertex $v$ of the central $(n-2)$-simplex of $\chi(\rho_0)$, set $c(v) = id(v)$.
For any other $\rho_i \neq \rho_0$, by the claim above, the coloring $c$ of $\chi(\rho_i)$
can be extended so that $c(\chi(\rho')) = \rho$ and
there is no $(n-2)$-simplex $\gamma \in \chi(\rho_i)$ with $c(\gamma) = \rho_0$.
Therefore, there is no fully colored $(n-1)$-simplex in $\chi(\rho)$, as explained before.

For the case we chose a face $\rho_j \neq \rho_0$, we simply permute the colors of the coloring just described
such that for the permuted coloring $c'$ we have $c'(\chi(\rho_0)) = \rho_j$.
The coloring $c'$ is induced by any permutation of $\rho$ that maps $\rho_0$ to $\rho_j$.
For example, in Figure~\ref{fig-example-sa} we can obtain a coloring for $\rho_j = \{0,1\}$ (instead of $\{0,2\}$)
by applying the permutation $0 \mapsto 0$, $1 \mapsto 2$ and $2 \mapsto 1$.
The lemma follows.
\end{proof}

For every simplex $\tau \in \chi^\ell(\sigma)$, let $ids(\tau)$ be the simplex containing
the first entries of the vertices in~$\tau$ (recall that each vertex of $\chi^\ell(\sigma)$ is a pair $(v, view)$
where $v$ is the id of a process and $view$ is its view after $\ell$ rounds);
note that $ids(\tau)$ is a $dim(\tau)$-face of $\sigma$, and $ids(\tau) \subseteq carr(\tau, \sigma, \chi^\ell)$,
since $\chi^\ell(\sigma)$ is a chromatic subdivision of $\sigma$.

\begin{proof}[Proof of Theorem~\ref{theo-colorings-sa}]
We prove now that set agreement valency task
${\cal T}=\langle \chi^{\ell}(\sigma), \sigma, val \rangle$
is $(n-1)$-locally solvable in one round.
We argue that $val$ is indeed the carrier map, and this  defines a
Sperner valency-task with input and output complexes $\chi^{\ell}(\sigma)$ and $\sigma$.
In the two rules above $val(\tau)$ is a non-empty face of $\sigma$.
Also, if $dim(\tau) \leq n-3$, then, $val(\tau) = ids(\tau)$,
and as already noted, $ids(\tau) \subseteq carr(\tau, \sigma, \chi^{\ell})$; otherwise
$val(\tau) = carr(\tau, \sigma, \chi^{\ell})$. In any case, $val(\tau) \subseteq carr(\tau, \sigma, \chi^{\ell})$.
It remains to be shown that $val$ is a carrier map.
Consider two simplexes $\tau', \tau \in \chi^{\ell}(\sigma)$ such that $\tau' \subseteq \tau$.
Observe that $ids(\tau') \subseteq ids(\tau)$ and $ids(\tau') \subseteq carr(\tau', \sigma, \chi^{\ell}) \subseteq carr(\tau, \sigma, \chi^{\ell})$.
If $dim(\tau') \leq n-3$, then $val(\tau') = ids(\tau')$ and $val(\tau)$ is either $ids(\tau)$ or $carr(\tau, \sigma, \chi^{\ell})$;
in both cases $val(\tau') \subseteq val(\tau)$.
And if $dim(\tau') > n-3$, then  $val(\tau') = carr(\tau', \sigma, \chi^{\ell})$, and also
$val(\tau) = carr(\tau, \sigma, \chi^{\ell})$ (as $\tau' \subseteq \tau$), and hence $val(\tau') = val(\tau)$.
Thus, we conclude that $val$ is carrier map.
To do so, we define a Sperner coloring $c_\tau$ of $\chi^{\ell+1}(\sigma)$
that is consistent and complete w.r.t.\ the task and
has no fully colored $(n-1)$-simplexes in $c_\tau(\chi(\tau))$,
for any input simplex $\tau \in \chi^\ell(\sigma)$.
We focus on the case  when $|\tau|=n$  because
for any simplex $\tau'$ of a smaller dimension, we can just pick any $\tau$ containing
$\tau'$, and set $c_{\tau'}$ to $c_\tau$ restricted to $\chi(\tau')$, i.e.~$c_\tau |_{\chi(\tau')}$.

Thus, for the rest of the proof fix an $(n-1)$-dimensional simplex of $\tau \in \chi^{\ell}(\sigma)$.
We define a Sperner coloring $c_\tau$
that is consistent with \textit{val} and has no fully colored $(n-1)$-simplexes in $\chi(\tau)$.
First, we use Lemma~\ref{lemma-color-sa} to define $c_\tau$ restricted to $\chi(\tau)$, i.e. $c_\tau |_{\chi(\tau)}$,
and then extend the coloring to all vertices in $\chi^{\ell+1}(\sigma)$, to finally obtain $c_\tau$.

Let $\rho = ids(\tau)$. Note that $\rho = \sigma$ but for clarity we use $\rho$.
ids's naturally induce a bijection between vertices of $\rho$ and $\tau$,
and $\chi(\rho)$ and $\chi(\tau)$, respectively,
hence any coloring (simplicial map) $\chi(\rho) \rightarrow \rho$ induces a coloring $\chi(\tau) \rightarrow ids(\tau)$.
Below, when we use Lemma~\ref{lemma-color-sa} applied to $\rho = ids(\tau)$, we can speak about faces of $\tau$ instead of faces of $\rho$.

Observe that either for every $(n-2)$-face $\tau'$ of $\tau$, $carr(\tau', \sigma, \chi^{\ell}) = \sigma$,
or for one $(n-2)$-face $\tau''$ of $\tau$, $carr(\tau', \chi^{\ell}) = ids(\tau'')$
and for every other $(n-2)$-face $\tau'$ of $\tau$, $carr(\tau', \sigma, \chi^{\ell}) = \sigma$.
Intuitively, $\tau'$ is ``inside'' $\chi^{\ell}(\sigma)$ or
only one $(n-2)$-face of $\tau'$ ``touches'' the boundary of $\chi^{\ell}(\sigma)$
(see Figure~\ref{fig-example-sa}).
We set $c_\tau |_{\chi(\tau)}$ using a coloring of $\chi(\tau)$ in Lemma~\ref{lemma-color-sa}, as follows.
In the former case,
$c_\tau |_{\chi(\tau)}$ is obtained with a coloring of $\chi(\tau)$,
as in Case~(2.a) of Lemma~\ref{lemma-color-sa},
while in the latter case,
is obtained with a coloring as in case (2.b), where $\tau''$ is the chosen face in that case of the lemma.

We argue that Lemma~\ref{lemma-color-sa} and the definition of \textit{val} implies that for any face $\tau'$ of $\tau$,
it holds that $c_\tau(\chi(\tau')) = val(\tau')$, which is good because we want $c_\tau$ to be consistent and complete w.r.t. \textit{val}.
If $dim(\tau') \leq n-3$, then $val(\tau') = ids(\tau')$, by definition of \textit{val},
and from Lemma~\ref{lemma-color-sa}(1), we know that $c_\tau(\chi(\tau')) = ids(\tau')$.
Also, by definition of \textit{val}, if $dim(\tau') > n-3$, then $val(\tau') = carr(\tau', \sigma, \chi^\ell)$.
Note that if $dim(\tau') = n-1$ (hence $\tau' = \tau$), then $val(\tau') = \sigma$,
and $c_\tau(\chi(\tau')) = val(\tau')$, by Lemma~\ref{lemma-color-sa}(3).
The subcase that remains to be shown is when $dim(\rho) = n-2$.
Again, if $val(\tau') = \sigma$, $c_\tau(\chi(\tau')) = val(\tau')$, by Lemma~\ref{lemma-color-sa}(3).
Thus, consider the case $val(\tau') = carr(\tau', \sigma, \chi^\ell) \neq \sigma$.
Observe that this can only happen when $\tau'$ is at the boundary of $\chi^\ell(\sigma)$,
and hence $carr(\tau', \sigma, \chi^\ell) = ids(\tau')$.
By Lemma~\ref{lemma-color-sa}(2.b), $c_\tau(\chi(\tau')) = ids(\tau')$
($\tau'$ was the chosen $(n-2)$-face of $\tau$ in case (2.b) of Lemma~\ref{lemma-color-sa} when defining $c_\tau$ on $\chi(\tau)$).

We now extend the coloring $c_\tau$ in two steps.
First, for any vertex $v \in \chi^\ell(\sigma)$ that does not belong to $\chi(\tau)$, we first set $c_\tau(v) = id(v)$.
Thus, for any input simplex $\lambda \in \chi^\ell(\sigma)$ that does not intersect $\tau$,
we have that $c_\tau(\chi(\lambda)) = ids(\lambda)$.
That is fine if $dim(\lambda) \neq n-2$, or $dim(\lambda) = n-2$ and $carr(\lambda, \sigma, \chi^\ell) \neq \sigma$,
because in such cases $c_\tau(\chi(\lambda)) = val(\lambda)$, by definition of \textit{val}.

But if $dim(\lambda) = n-2$ and $carr(\lambda, \sigma, \chi^\ell) = \sigma$,
then $c_\tau(\chi(\lambda)) = ids(\lambda) \subset val(\lambda) = \sigma$,
and then in this case $c_\tau$ is not complete.
Note that the proper contention is because there is no vertex in $\chi(\lambda)$ that is mapped to
the unique vertex in $\sigma \setminus ids(\lambda)$.
To solve this issue, for every such input simplex $\lambda \in \chi^\ell(\sigma)$,
we pick one vertex $v \in \chi(\lambda)$ with $carr(v, \lambda, \chi) = \lambda$
(which belongs to the ``central'' $(n-2)$-simplex of $\chi(\lambda)$)
and set $c_\tau(v)$ to the unique vertex in $\sigma \setminus ids(\lambda)$.
Therefore, we now have that $c_\tau(\chi(\lambda)) = val(\lambda) = \sigma$.

To fully prove that $c_\tau$ is consistent and complete w.r.t. \textit{val},
the only case that remains is of an input simplex $\lambda$ that
intersects $\tau$ but is not one of its faces.
Let $\lambda'$ and $\tau'$ be the proper faces of $\lambda$
and $\tau$ such that $\lambda = \lambda' \cup \tau'$.
We already know that $c_\tau(\chi(\lambda')) = val(\lambda')$
and $c_\tau(\chi(\tau')) = val(\tau')$.
If $dim(\lambda' \cup \tau') \leq n-3$, the definition of \textit{val} implies that
$val(\lambda \cup \tau') = val(\lambda') \cup val(\tau')$, hence
$c_\tau(\chi(\lambda' \cup \tau')) = val(\lambda \cup \tau')$.
If $dim(\lambda' \cup \tau') = n-1$, then it must be that $val(\lambda \cup \tau') = \sigma$,
from the definition of \textit{val}, and then clearly $c_\tau(\chi(\lambda' \cup \tau')) = val(\lambda \cup \tau')$,
by construction.
If $dim(\lambda' \cup \tau') \geq n-2$, we have two cases,
$val(\lambda' \cup \tau')$ is either $ids(\lambda' \cup \tau')$ or $\sigma$;
in any case, the very definition of $c_\tau$ implies that
$c_\tau(\chi(\lambda' \cup \tau')) = val(\lambda \cup \tau')$.

Therefore, so far we have a coloring $c_\tau$ that is consistent and complete w.r.t. \textit{val}
and $c_\tau(\chi(\tau))$ has no fully colored $(n-1)$-simplexes
(since we defined $c_\tau(\chi(\tau))$ using Lemma~\ref{lemma-color-sa}).
To finally conclude that $\langle \chi^\ell(\sigma), \sigma, val \rangle$ is locally solvable in one round,
we argue $c_\tau$ is a Sperner coloring,
which essentially follows becase \textit{val} is a Sperner-valency coloring and
$c_\tau$ is consistent and complete w.r.t. \textit{val}.
To prove the claim in detail, consider any vertex $v \in \chi^\ell(\sigma)$.
If $v \notin \chi(\tau)$, then $c_\tau(v) = id(v) \in carr(v, \sigma, \chi^\ell)$.
Otherwise, let $\tau' = carr(v, \tau, \chi)$.  Note that $ids(\tau') \subseteq carr(v, \sigma, \chi^\ell)$.
It follows from  Lemma~\ref{lemma-color-sa} that $c_\tau(\chi(\tau'))$ is either $ids(\tau')$ or~$\sigma$.
If $c_\tau(\chi(\tau')) = ids(\tau')$ then $c_\tau(v) \in carr(v, \sigma, \chi^\ell)$.
For the remaining case, note that $c_\tau(\chi(\tau')) = \sigma$ only if $dim(\tau') = n-1$ (hence $carr(\tau', \sigma, \chi^\ell) = \sigma$),
or $dim(\tau') = n-2$ and $carr(\tau', \sigma, \chi^\ell) = \sigma$
(i.e. $\tau'$ is not the chosen $(n-2)$-face of $\tau$ in the case (2.b) of Lemma~\ref{lemma-color-sa});
in either case we have that $c_\tau(v) \in carr(v, \sigma, \chi^\ell)$.
We conclude that $c_\tau$ is a Sperner coloring.
\end{proof}

%We now can argue that there is no local-valency impossibility proof for $(n-1)$-set agreement.
%In a local-valency proof for $(n-1)$-set agreement, the hypothetical protocol uses Theorem~\ref{theo-colorings-sa}
%and provides $\chi^{\ell}(\sigma)$ together with the valencies induced by $val$.
%By Theorem~\ref{theo-colorings-sa}, for any $\tau \in  \chi^\ell(\sigma)$,
%the Sperner coloring $c_\tau$ does not have fully colored $(n-1)$-simplexes in~$\chi(\tau)$.

%SR: I don't think we need this

%Clearly, termination and agreement are not contradicted, and validity cannot be contradicted either
%because of the very definition of $val$ and the fact that $c_\tau$ is consistent and complete w.r.t. $val$.
%We conclude that regardless of the election of $\tau$, a contradiction cannot be reached.
%Thus, we have:
%
%
%\begin{coro}
%There is no local-valency proof for the impossibility of $(n-1)$-set agreement in the IIS model
%for $n \geq 3$ processes.
%\end{coro}

\subsection{Weak Symmetry Breaking}
\label{sec:WSBlocalSolving}

\begin{theorem}
\label{theo-colorings-wsb}
For any $n\geq 3$ and  $\ell \geq  1$,
the weak symmetry breaking valency task ${\cal T}=\langle \chi^{\ell}(\sigma), \sigma, val \rangle$ in Definition~\ref{def:locSolvWSB}
is locally solvable in one round.
\end{theorem}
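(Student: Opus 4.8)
The plan is to mirror the proof of Theorem~\ref{theo-colorings-sa}. First one verifies that $val$ of Definition~\ref{def:locSolvWSB} is a carrier map and is symmetric in the sense of Definition~\ref{def-symmetric-outputs}; both are immediate, since $val(\tau)$ depends only on $dim(\tau)$ and $dim$ is preserved by the order-preserving bijections $\phi$. Then, exactly as in Theorem~\ref{theo-colorings-sa}, it suffices to treat an $(n-1)$-dimensional input simplex $\tau\in\chi^{\ell}(\sigma)$: when $dim(\tau)<n-1$ the complex $\chi(\tau)$ has no $(n-1)$-simplexes, so the requirement ``$b_\tau(\chi(\tau))$ has no monochromatic $dim(\sigma)$-simplex'' is vacuous, and one may reuse $b_{\tau_0}$ for any $(n-1)$-dimensional $\tau_0\supseteq\tau$.

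The core is a weak symmetry breaking analogue of Lemma~\ref{lemma-color-sa}. I would prove: for the $(n-1)$-simplex $\rho=\{0,\dots,n-1\}$, $n\ge 3$, and any distinguished $(n-2)$-face $\rho_j$ of $\rho$ (write $\rho_i$ for the $(n-2)$-face of $\rho$ omitting vertex $i$), there is a simplicial map $c:\chi(\rho)\to\{0,1\}$ with (1) $c\equiv 1$ on $\chi(\rho')$ whenever $dim(\rho')\le n-3$; (2) $c(\chi(\rho_i))=\{0,1\}$ for every $(n-2)$-face, with $c$ equal to $0$ on the whole central $(n-2)$-simplex of $\chi(\rho_j)$ and equal to $1$ exactly on the largest-$id$ vertex of the central $(n-2)$-simplex of each $\chi(\rho_i)$, $i\ne j$; and (3) $c(\chi(\rho))=\{0,1\}$ has no monochromatic $(n-1)$-simplex. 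The coloring is: $1$ on every vertex of carrier-dimension $\le n-3$; on the central $(n-1)$-simplex $\{w_0,\dots,w_{n-1}\}$ of $\chi(\rho)$, set $c(w_j)=1$ and $c(w_i)=0$ for $i\ne j$; and color the central $(n-2)$-simplexes of the $\chi(\rho_i)$ as prescribed in (2). To verify it I would use the ordered-set-partition description of $\chi(\rho)$: an $(n-1)$-simplex of $\chi(\rho)$ is an ordered partition $(B_1,\dots,B_r)$ of $\{0,\dots,n-1\}$, the vertices of which lying in the first $r-2$ blocks have carrier-dimension $\le n-3$ (hence color $1$), while the vertices lying in $B_r$ are exactly the $w_i$ with $i\in B_r$. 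Hence such a simplex is not all-$0$ unless $r\le 2$, i.e.\ unless it is $\{w_0,\dots,w_{n-1}\}$ or one of the simplexes $\gamma_i$ formed by the central $(n-2)$-simplex of $\chi(\rho_i)$ together with $w_i$, and these carry both colors because $c(w_j)=1$ and, for $i\ne j$, the largest-$id$ central vertex of $\chi(\rho_i)$ is $1$ while the remaining ones and $w_i$ are $0$; and it is not all-$1$, because an all-$1$ simplex would require $B_r\subseteq\{j\}$, whence all of its carrier-dimension-$(n-2)$ vertices lie on $\rho_j$, where $c\equiv 0$. Conclusions (1)--(3) follow, and once $\rho_j$ is fixed the restriction of $c$ to the boundary of $\chi(\rho)$ is symmetric, since the rule below dimension $n-2$ is uniform and the rule on the $\chi(\rho_i)$, $i\ne j$, refers only to the $id$-order.

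With the lemma available, $b_\tau$ is assembled as in Theorem~\ref{theo-colorings-sa}. The structural fact I would use is that a top-dimensional $\tau\in\chi^{\ell}(\sigma)$ has at most one facet whose carrier in $\chi^{\ell}(\sigma)$ is a proper face of $\sigma$: in any immediate-snapshot execution and for any two processes $i,i'$, one observes the other (in a common concurrency class both write before reading, and otherwise the process in the earlier class is read by the one in the later class), so at most one process is observed by no other, and the facet $\tau\setminus\{v\}$ has carrier $\ne\sigma$ exactly when the process of $v$ is observed by no other. I then fix once and for all the symmetric pattern on the boundary $\partial\sigma$ dictated by rule (1) and the lemma's distinguished-face rule, apply the lemma to $\rho=ids(\tau)$ taking $\rho_j$ to be that unique boundary facet of $\tau$ (or any facet, if $\tau$ meets $\partial\sigma$ only in faces of dimension $\le n-3$, which are uniformly $1$ anyway), and extend $b_\tau$ to $\chi^{\ell+1}(\sigma)$ as in Theorem~\ref{theo-colorings-sa}: color $1$ every vertex of carrier-dimension $\le n-3$, impose the same fixed symmetric pattern on every facet of $\sigma$, and for each $(n-2)$- or $(n-1)$-simplex of $\chi^{\ell}(\sigma)$ disjoint from $\tau$ recolor one interior vertex of its subdivision to $0$ to restore completeness. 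Consistency and completeness of $b_\tau$ with respect to $val$ are then checked face by face --- faces of $\tau$, simplexes disjoint from $\tau$, and simplexes meeting $\tau$ in a proper face --- verbatim as in the proof of Theorem~\ref{theo-colorings-sa}, using that $val$ depends only on dimension; the absence of monochromatic $(n-1)$-simplexes in $b_\tau(\chi(\tau))$ is conclusion (3) of the lemma; and $b_\tau$ is symmetric because every coloring choice on $\partial\chi^{\ell+1}(\sigma)$ is either the uniform value $1$ or the single fixed symmetric pattern. The corresponding statement for $(2n-2)$-renaming (Theorem~\ref{theo-colorings-ren}) then follows from Theorem~\ref{theo-colorings-wsb} via the known reduction.

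The step I expect to be the main obstacle is reconciling symmetry with the absence of monochromatic simplexes. The near-boundary $(n-1)$-simplexes of $\chi(\rho)$ that consist of a single carrier-dimension-$(n-3)$ vertex together with the central $(n-1)$-simplex force at least two vertices of that central simplex to be $0$, while the central simplex itself must not be monochromatic; hence the coloring cannot be invariant under all relabelings of $\rho$ and must distinguish a facet, yet $b_\tau$ must be symmetric on $\partial\chi^{\ell+1}(\sigma)$. The reconciliation rests on the structural fact above, which lets the distinguished facet always be placed on $\tau$'s unique boundary facet, where one uniform pattern serves all facets of $\sigma$ at once. Proving that structural fact and then checking that the resulting global coloring is simultaneously symmetric, consistent and complete --- in particular in the sub-case where $\tau$ lies in the interior of $\chi^{\ell}(\sigma)$, so there is no boundary facet to match --- is where most of the work lies.
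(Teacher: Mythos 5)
Your proposal is correct and follows essentially the same route as the paper: a binary-coloring analogue of Lemma~\ref{lemma-color-sa} (your lemma is a concrete instance of the paper's Lemma~\ref{lemma-color-wsb}, which for $j=0$ prescribes the same coloring up to which single vertex of each near-boundary central $(n-2)$-simplex receives $1$), followed by an extension to $\chi^{\ell+1}(\sigma)$, a completeness repair on $(n-2)$-simplexes, and the reduction to $(2n-2)$-renaming at the end. The one place where you add genuinely unnecessary machinery is what you flag as the ``main obstacle'': aligning the distinguished facet $\rho_j$ with $\tau$'s unique boundary facet. The paper's proof always distinguishes $\rho_0$, regardless of where $\tau$ sits, and obtains a symmetric $b_\tau$ simply by propagating the coloring of $\chi(\tau)$ along the order-preserving bijections $\phi$ to all like-dimensional boundary pieces and coloring every unmatched boundary vertex $1$; the same structural fact you invoke (that $\chi(\tau)$ meets at most one $\chi^{\ell+1}(\sigma')$ per dimension) is exactly what makes this propagation well defined. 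In other words, whichever pattern $\chi(\tau)$ happens to exhibit on its (at most one) boundary facet, propagation replicates it symmetrically on all facets of $\sigma$, so no case split on $\tau$'s orientation and no re-parametrisation of the lemma by $j$ is needed. Your ordered-set-partition verification of the lemma is fine, though be a little careful that vertices lying in $B_{r-1}$ with $|B_r|=1$ have carrier dimension exactly $n-2$, not $\leq n-3$, so they must be handled via the central-$(n-2)$-simplex rule rather than the color-$1$ rule; your case analysis implicitly does this but it is worth stating.
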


The proof of Theorem~\ref{theo-colorings-wsb} is similar in structure to
the proof for set agreement in the previous section.
It relies on Lemma~\ref{lemma-color-wsb} below
to produce binary colorings that are almost symmetric on the
boundary and do not have monochromatic $dim(\sigma)$-simplexes.
Figure~\ref{fig-example-wsb} shows an example of such a coloring.
In the proof of Theorem~\ref{theo-colorings-wsb}, we use these binary colorings to
locally solve symmetric binary-valency tasks.

\begin{figure}[t]
\begin{center}
\vspace{0.4cm}
\includegraphics[scale=0.4]{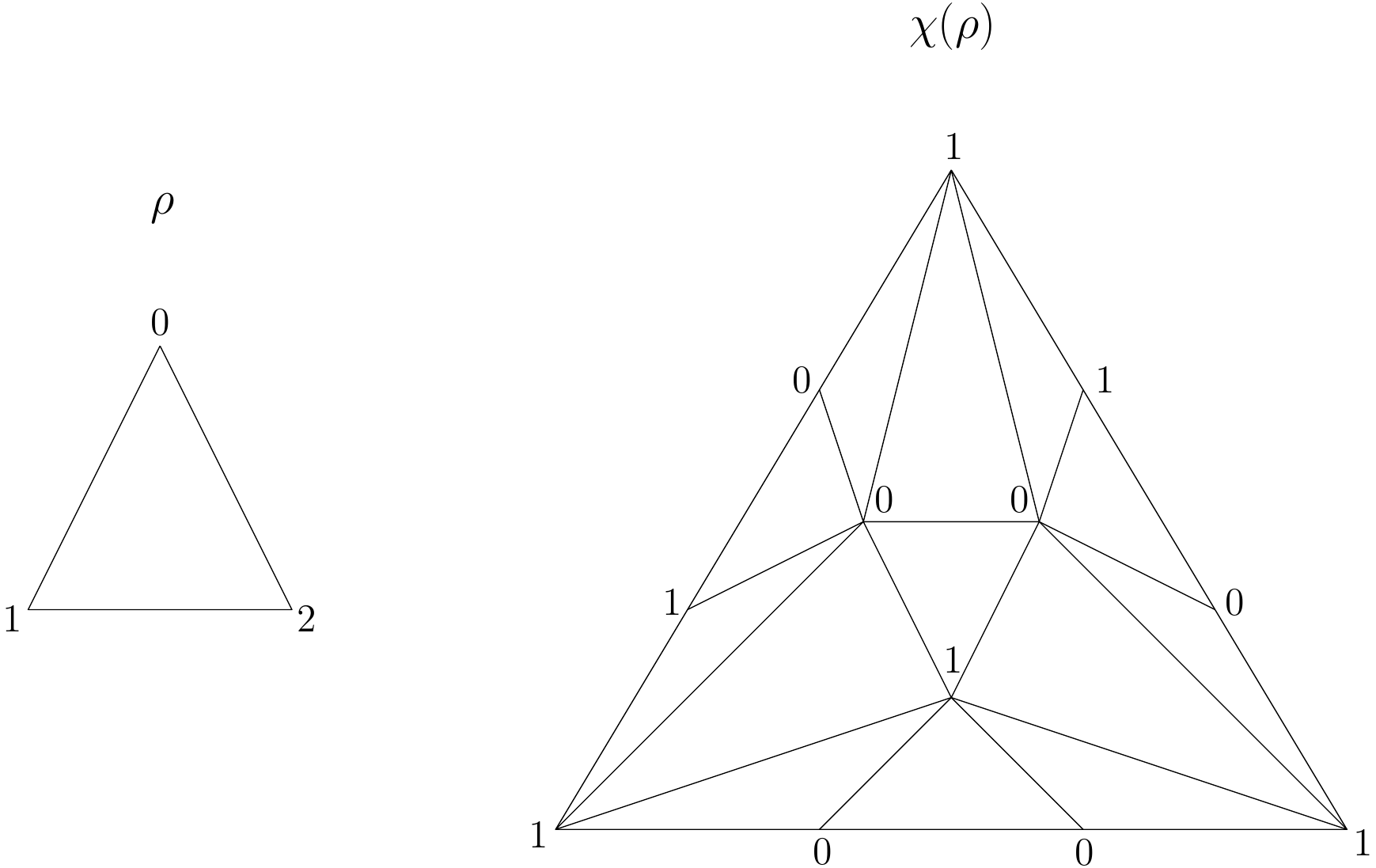}
\caption{Example for Lemma~\ref{lemma-color-wsb}.
Since every vertex $v$ of $\rho$ has dimension $n-3$, the only vertex in $\chi(v)$ has color $1$, implying (1).
For every $(n-2)$-face  $\rho'$ of $\rho$, there are vertices in $\chi(\{0,2\})$ with color 0 and 1, implying (2).
Finally, $\chi(\rho)$ has no monochromatic $(n-1)$-simplex, implying (3).}
\label{fig-example-wsb}
\end{center}
\end{figure}

\begin{lemma}
\label{lemma-color-wsb}
Consider the $(n-1)$-dimensional simplex $\rho = \{0, \hdots, n-1\}$ with $n \geq 3$.
There is a binary coloring (simplicial map) $b: \chi(\rho) \rightarrow \{0,1\}$ such that:
\begin{enumerate}
\item for every $\rho' \subset \rho$ with $dim(\rho') \leq n-3$, $b(\chi(\rho')) = \{1\}$,
\item for every $(n-2)$-dimensional face $\rho' \subseteq \rho$, $b(\chi(\rho')) = \{0,1\}$,
\item $b(\chi(\rho)) = \{0,1\}$ and there is no monochromatic $(n-1)$-simplex in $b(\chi(\rho))$.
\end{enumerate}
\end{lemma}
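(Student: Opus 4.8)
The plan is to build the binary coloring $b:\chi(\rho)\to\{0,1\}$ in the same style as the proof of Lemma~\ref{lemma-color-sa}: first fix $b$ on the boundary so that (1) and (2) are forced, then color the ``central'' region so that (3) holds. First I would set $b(v)=1$ for every vertex $v$ of $\chi(\rho')$ with $\dim(\rho')\le n-3$; this immediately gives (1), and since every $(n-2)$-face $\rho'$ of $\rho$ contains an $(n-3)$-subface, it puts color $1$ on the whole $(n-3)$-skeleton of each $\chi(\rho')$. It remains to color, for each $(n-2)$-face $\rho_i=\rho\setminus\{i\}$, the vertices of the central $(n-2)$-simplex of $\chi(\rho_i)$, and then the vertices of the central $(n-1)$-simplex of $\chi(\rho)$.

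For the boundary faces, the idea is to put a single $0$ on each central $(n-2)$-simplex: for each $i$, pick one vertex $w_i$ of the central $(n-2)$-simplex of $\chi(\rho_i)$ and set $b(w_i)=0$, all other central vertices getting $1$. Then $b(\chi(\rho_i))=\{0,1\}$, giving (2); and no $(n-2)$-simplex of $\chi(\rho_i)$ is monochromatic-$0$, since every such simplex meets the $1$-colored part of the boundary (the $(n-3)$-skeleton is all $1$, and the central $(n-2)$-simplex has only one $0$). The analogue of Claim~\ref{claim-SA} that I would isolate is: on $\chi(\rho_i)$ one can extend the already-fixed boundary coloring so that $b(\chi(\rho_i))=\{0,1\}$ and $\chi(\rho_i)$ has no monochromatic-$0$ $(n-2)$-simplex. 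This is exactly the content of the construction just described, and it is what lets us later rule out monochromatic-$0$ $(n-1)$-simplexes of $\chi(\rho)$ that reach down to the boundary.

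Then I would color the interior: set $b(v)=0$ for every vertex $v$ of the central $(n-1)$-simplex of $\chi(\rho)$ (the unique $(n-1)$-simplex all of whose vertices have carrier $\rho$). Now for (3): $b(\chi(\rho))$ contains both colors ($1$ on the $(n-3)$-skeleton, $0$ on the center). Suppose for contradiction some $(n-1)$-simplex $\gamma$ of $\chi(\rho)$ is monochromatic. It cannot be all-$1$: the central $(n-1)$-simplex is all-$0$ and any other $(n-1)$-simplex of $\chi(\rho)$ shares a vertex-path to it, but more directly one shows every $(n-1)$-simplex of $\chi(\rho)$ either is the central one or contains a vertex of some $\chi(\rho_i)$; I would instead argue it by the structure of $\chi(\rho)$ — every $(n-1)$-simplex of the standard chromatic subdivision either is the central simplex or has a facet lying in some $\chi(\rho_i)$. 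If $\gamma$ is the central simplex it is all-$0$, not all-$1$. Otherwise $\gamma=\gamma'\cup\{u\}$ with $\gamma'$ an $(n-2)$-facet in some $\chi(\rho_i)$; if $\gamma$ is all-$0$ then $\gamma'$ is a monochromatic-$0$ $(n-2)$-simplex of $\chi(\rho_i)$, contradicting the boundary construction; if $\gamma$ is all-$1$ then $\gamma'$ is all-$1$ in $\chi(\rho_i)$, contradicting $b(w_i)=0$ unless $\gamma'$ avoids $w_i$ — here I need the extra observation that, by the standard structure of $\chi(\rho_i)$, every $(n-1)$-simplex of $\chi(\rho)$ incident to $\chi(\rho_i)$ whose boundary facet lies in the central $(n-2)$-simplex of $\chi(\rho_i)$ actually contains $w_i$; to make this robust I would instead choose $w_i$ so that it lies in every $(n-2)$-simplex of the central region reachable from the relevant $(n-1)$-simplexes, or, more simply, recolor so that the central $(n-2)$-simplex of $\chi(\rho_i)$ is entirely $0$-free of monochromatic facets by the same one-vertex trick but positioned at the unique vertex adjacent to the interior. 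The main obstacle, and the step I expect to cost the most care, is precisely this combinatorial check that the single $0$-vertex $w_i$ placed on each boundary central $(n-2)$-simplex, together with the all-$0$ interior, leaves no monochromatic $(n-1)$-simplex anywhere in $\chi(\rho)$ — i.e.\ matching up the local structure of $\chi(\rho_i)$ inside $\chi(\rho)$ so that every potentially-bad $(n-1)$-simplex is pierced either by a $1$ from the $(n-3)$-skeleton or by the $0$ at $w_i$; everything else (parts (1) and (2), and that $b$ is a well-defined simplicial map) is routine.
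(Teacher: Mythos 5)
Your construction fails condition (3) already at the central simplex. You set every vertex of the central $(n-1)$-simplex $\lambda$ of $\chi(\rho)$ to $0$, which makes $\lambda$ itself a $0$-monochromatic $(n-1)$-simplex in $b(\chi(\rho))$. In your verification you only observe that $\lambda$ is ``all-$0$, not all-$1$,'' but condition (3) forbids monochromatic simplexes of \emph{either} color, so $\lambda$ is a direct counterexample to what you are trying to prove. The paper avoids this by deliberately breaking the symmetry inside $\lambda$: the vertex of $\lambda$ with id $0$ is colored $1$ and the remaining $n-1$ vertices are colored $0$.

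Your boundary coloring is also inverted relative to what works. You color the central $(n-2)$-simplex $\tau_i$ of each $\chi(\rho_i)$ with a single $0$ and the rest $1$. Combined with the $(n-3)$-skeleton being all $1$, this leaves room for $1$-monochromatic $(n-2)$-simplexes in $\chi(\rho_i)$ (any facet of $\chi(\rho_i)$ that meets $\tau_i$ but avoids $w_i$), which is exactly the obstruction you flag at the end of your write-up. The paper instead colors $\tau_0$ entirely $0$ and each $\tau_i$ ($i\geq 1$) with a single $1$ and the rest $0$, so that $\chi(\rho_0)$ has no $1$-monochromatic facet and each $\chi(\rho_i)$ ($i\geq 1$) has no $0$-monochromatic facet; then the vertex of $\lambda$ with id $i$ is given the complementary color ($1$ for $i=0$, $0$ for $i\geq 1$), so that in the decomposition $\gamma = \gamma' \cup \{u\}$ with $\gamma' \subset \chi(\rho_i)$ and $u\in\lambda$, the vertex $u$ always blocks the one color that $\gamma'$ could still be monochromatic in. Your construction would need to be flipped along these lines (mostly $0$ on the $\tau_i$'s, one dissenting $1$ in $\lambda$, and a per-$i$ choice of which color $\tau_i$ leans towards) for the case analysis to close.
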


\begin{proof}
We exhibit such a binary coloring $b$.
For every $\rho' \subset \rho$ with $dim(\rho') \leq n-3$,
for every $v \in \chi(\rho')$, color $v$ with  1. Thus, $b(\chi(\rho')) = \{1\}$.
Let $\rho_i$ be the $(n-2)$-dimensional face of $\rho$ without vertex $i$.
We have already defined the coloring of the boundary of $\chi(\rho_i)$,
with all its vertices having color 1.

The vertices that remain to be colored are the vertices of $\tau_i$,
the ``central'' $(n-2)$-simplex of $\chi(\rho_i)$ that does not intersect
the boundary of $\chi(\rho_i)$.
For $i = 0$, color every vertex of $\tau_i$ with 0,
while for $i$, $1 \leq i \leq n-1$,
{color one vertex of $\tau_i$ with 1 and the rest with 0.}
%AC: to simplify the proof for renaming
%color at least one vertex of $\tau_i$ with $d$,
%for each $d \in \{0,1\}$.
We have:
\begin{itemize}
\item For $i=0$, $\chi(\rho_i)$ has no 1-monochromatic $(n-2)$-simplex and has exactly one
0-monochromatic $(n-2)$-simplex, which is precisely~$\tau_i$.
\item For $1 \leq i \leq n-1$, $\chi(\rho_i)$ has no 0-monochromatic $(n-2)$-simplexes
and has at least one 1-monochromatic $(n-2)$-simplex.
\item For $0 \leq i \leq n-1$, $b(\chi(\rho_i)) = \{0,1\}$.
\end{itemize}
We have defined the coloring of the boundary of $\chi(\rho)$,
and the vertices that remain to be colored are
the vertices of $\lambda$, the ``central'' $(n-1)$-simplex
$\chi(\rho)$ that does not intersect the boundary of $\chi(\rho)$.
Color with 1 the vertex of $\lambda$ with id 0;
color with 0 the remaining vertices of $\lambda$.

We now argue that there is no monochromatic $(n-1)$-simplex in $\chi(\rho)$.
First, by construction, $\lambda$ is not monochromatic.
Second, any other $(n-1)$-simplex of $\chi(\rho)$ has the form
$\gamma' \cup \gamma''$ with:
\begin{itemize}
\item $\gamma'$ being a $dim(\rho')$-simplex of $\chi(\rho')$ for a non-empty proper face $\rho'$ of  $\rho$,
\item $\gamma''$ being a non-empty proper face of  $\lambda$, and
\item $ids(\gamma') \cap ids(\gamma'') = \emptyset$.
\end{itemize}

We consider two cases.
First, if $dim(\rho') \leq n-3$, then $b(\gamma') = \{1\}$
because $b(\chi(\rho')) = \{1\}$, by definition.
Note that $dim(\gamma'') \geq 2$, and since $\lambda$ has exactly one vertex with color 1,
$b(\gamma'') = \{0,1\}$, which implies that $\gamma' \cup \gamma''$ is not monochromatic.

Otherwise, $dim(\rho') = n-2$, and hence,
$\rho'$ is one of the $\rho_i$ defined above,
namely, an $(n-2)$-dimensional face of $\rho$.
Thus, $\gamma''$ is a vertex of $\lambda$. Moreover, $ids(\gamma')$ is precisely $\rho_i$, and
hence $\gamma''$ is the vertex of $\lambda$ with id $i$, as  $ids(\gamma') \cap ids(\gamma'') = \emptyset$.
As observed above, if $i=0$, $\chi(\rho_i))$ has no 1-monochromatic $(n-2)$-simplex and has exactly one
0-monochromatic $(n-2)$-simplex; thus $\gamma'$ is not 1-monochromatic.
By construction, $\gamma''$ has color 1, and then $\gamma' \cup \gamma''$ is not 0-monochromatic.
If $1 \leq i \leq n$, $\chi(\rho_i)$ has no 0-monochromatic $(n-2)$-simplexes
and has at least one 1-monochromatic $(n-2)$-simplex; then $\gamma'$ is not 0-monochromatic.
By construction, $\gamma''$ has color 0, and then $\gamma' \cup \gamma''$ is not monochromatic.
The lemma follows.
\end{proof}

\begin{proof}[Proof of Theorem~\ref{theo-colorings-wsb}]
We now show that the weak symmetry breaking valency task
${\cal T}=\langle \chi^{\ell}(\sigma), \sigma, val \rangle$
is locally solvable in one round.
We first check that $val$ is a carrier map.
For any two simplexes $\tau', \tau \in \chi^{\ell}(\sigma)$ such that $\tau' \subseteq \tau$,
from the definition of $val$ we have that if $val(\sigma) = \{0,1\}$, then $val(\tau')$ is either
$\{0,1\}$ or $\{1\}$, and if $val(\sigma) = \{1\}$, $val(\tau')$ has to be $\{1\}$.
In any case, $val(\tau') \subseteq val(\tau)$.
We now argue that $val$ is symmetric.
For any two distinct faces $\sigma', \sigma''$ of $\sigma$
with the same dimension,
$\tau \in \chi^{\ell}(\sigma')$ and $\phi(\tau) \in \chi^{\ell}(\sigma'')$
have the same valency as they have the same dimension,
where $\phi$ is the simplicial bijection between
$\chi^{\ell}(\sigma')$ and $\chi^{\ell}(\sigma'')$ that maps
vertices preserving order.
Thus, we conclude that $\langle \chi^\ell(\sigma), \{0,1\}, val \rangle$ is a symmetric binary-valency task.
We need to show that for every input simplex $\tau \in \chi^\ell(\sigma)$,
we define a symmetric binary coloring $b_\tau$ of $\chi^{\ell+1}(\sigma)$ that is consistent and complete
w.r.t. \textit{val} and has no monochromatic $(n-1)$-simplexes in $b_\tau(\chi(\tau))$.
We focus on the case is when $\tau$ is of dimension $n-1$ because
for any simplex $\tau'$ of a smaller dimension,
we can just pick any $\tau$ containing
$\tau'$, and set $b_{\tau'}$ to
$b_\tau$ restricted to $\chi(\tau')$, i.e. $b_\tau |_{\chi(\tau')}$.

For the rest of the proof fix an $(n-1)$-dimensional simplex of $\tau \in \chi^{\ell}(\sigma)$.
We define a symmetric binary coloring $b_\tau$
that is consistent and complete w.r.t. \textit{val} and has no monochromatic $(n-1)$-simplexes in $\chi(\tau)$.
First, we use Lemma~\ref{lemma-color-wsb} to define $b_\tau$ restricted to $\chi(\tau)$, i.e. $b_\tau |_{\chi(\tau)}$,
and then extend the coloring to all vertices in $\chi^{\ell+1}(\sigma)$, to finally obtain $b_\tau$.

Let $\rho = ids(\tau)$. Note that $\rho = \sigma$ but for clarity we use $\rho$.
id's naturally induce a bijection between $\rho$ and $\tau$,
and $\chi(\rho)$ and $\chi(\tau)$,
hence any coloring (simplicial map) $\chi(\rho) \rightarrow \rho$ induces a coloring $\chi(\tau) \rightarrow ids(\tau)$.
Below, when we use Lemma~\ref{lemma-color-wsb} applied to $\rho = ids(\tau)$, we can speak about faces of $\tau$ instead of faces of $\rho$.

First, we set $b_\tau |_{\chi(\tau)}$ using a coloring of $\chi(\tau)$ in Lemma~\ref{lemma-color-wsb}.
We have that $b_\tau |_{\chi(\tau)}$ is consistent and complete with respect to \textit{val}:
for every face $\tau'$ of $\tau$, if $dim(\tau') \leq n-3$, $val(\tau') = \{1\}$, by definition of \textit{val},
and $b_\tau(\chi(\tau')) = \{1\}$, by Lemma~\ref{lemma-color-wsb}(1);
and if $dim(\tau') \geq n-2$, $val(\tau') = \{0,1\}$, by definition of \textit{val},
and $b_\tau(\chi(\tau')) = \{0,1\}$, by Lemma~\ref{lemma-color-wsb}(2-3).

We extend $b_\tau$ in two steps. In the first step,
we pick any vertex $v \in \chi^{\ell+1}(\sigma)$ that does not belong to $\chi(\tau)$ (which is uncolored yet).
If there are faces $\sigma', \sigma''$ of $\sigma$ with the same dimension such that
$v \in \chi^{\ell+1}(\sigma')$, $\phi(v) \in \chi^{\ell+1}(\sigma'')$ and $\phi(v) \in \chi(\tau)$,
where $\phi$ is the simplicial bijection between
$\chi^{\ell+1}(\sigma')$ and $\chi^{\ell+1}(\sigma'')$ that maps
vertices preserving order,
then set $b_\tau(v) = b_\tau(\phi(v))$;
otherwise, set $b_\tau(v)=1$.
In words: if $\chi(\tau)$ ``touches'' the boundary of $\chi^{\ell+1}(\sigma)$,
we replicate that ``part'' of the coloring in its symmetric ``counterparts'' in the boundary.
Observe that $b_\tau$ is well defined because, since $\ell \geq 1$, there are no two vertices of $u, v \in \chi(\tau)$
such that there are two distinct faces $\sigma', \sigma''$ of $\sigma$ of same dimension such that $u \in \chi^{\ell+1}(\sigma')$
and $v \in \chi^{\ell+1}(\sigma'')$;
intuitively, $\chi(\tau)$ can ``touch'' either $\chi^{\ell+1}(\sigma')$ or $\chi^{\ell+1}(\sigma'')$ but not both.
Note that $b_\tau$ is symmetric.

It is not hard to see that for any input simplex $\lambda \in \chi^\ell(\sigma)$ with $dim(\lambda) \leq n-3$,
$b_\tau(\chi(\lambda)) = val(\lambda)$.
First, if $\lambda$ is a face of $\tau$,
we have already saw that this is true.
Second, if $\lambda$ is a face of $\tau$, then $b_\tau(\chi(\lambda)) = \{1\}$ because even if
a vertex $v \in \chi(\lambda)$ is in the boundary of $\chi^{\ell+1}(\sigma)$ and
gets its color from a vertex $u \in \chi(\tau)$ (i.e. $b_\tau(v) = b_\tau(u)$),
it must be that  $b_\tau(u) = 1$ because there must be a face $\tau'$ of $\tau$ of dimension $dim(\lambda)$
such that $u \in \chi(\tau')$, and by Lemma~\ref{lemma-color-wsb}(1), $b_\tau(\chi(\tau')) = \{1\}$;
and finally, by definition, $val(\lambda) = \{1\}$.

However, we cannot say that same for any input simplex
$\lambda \in \chi^\ell(\sigma)$ with $dim(\lambda) \geq n-2$.
Consider the case that $\chi(\lambda)$ does not intersect $\chi(\tau)$
and the boundary of $\chi^{\ell+1}(\sigma)$;
in that case $b_\tau(\chi(\lambda)) = 1$, by definition of $b_\tau$,
but $val(\lambda) = \{0,1\}$, by definition of \textit{val}.
We fix this issue in the second step of the construction:
for any $\lambda \in \chi^\ell(\sigma)$ with $dim(\lambda) = n-2$ and $b_\tau(\chi(\lambda)) = \{1\}$,
pick the vertex $v \in \chi(\lambda)$ with smallest id among the vertices with $carr(v, \lambda, \chi) = \lambda$
(namely, $v$ is a vertex with smallest id of the ``central'' $(n-2)$-simplex of $\chi(\lambda)$),
and set $b_\tau(v) = 0$.

By construction, we have that $b_\tau(\chi(\lambda)) = \{0,1\} = val(\lambda)$.
 Note that $\lambda$ is not face of $\tau$ because initially we had $b_\tau(\chi(\lambda)) = \{1\}$,
 which is not true for $(n-2)$-dimensional faces of $\tau$, by Lemma~\ref{lemma-color-wsb}(2),
 and if $\lambda$ intersects $\tau$, then $v \notin \chi(\tau)$ because $v$ is an ``internal'' vertex
 of $\chi(\lambda)$. Therefore, $b_\tau |_{\chi(\tau)}$ remains the same after the second step.
 Moreover, since we pick vertices with smallest id, $b_\tau$ remains symmetric.
 Finally, for any  $\lambda \in \chi^\ell(\sigma)$ with $dim(\lambda) = n-1$,
 if $\lambda = \tau$, we know already that $b_\tau(\chi(\lambda)) = \{0,1\}$,
 and if $\lambda \neq \tau$, we already saw that for every $(n-2)$-face $\tau'$ of $\lambda$,
 $b_\tau(\chi(\tau')) = \{0,1\}$, and thus $b_\tau(\chi(\lambda)) = \{0,1\}$.
 Therefore, we conclude that $b_\tau(\chi(\lambda)) = val(\lambda)$, for
 every input simplex $\tau \in \chi^\ell(\sigma)$.

Thus, we have shown that $b_\tau$ is a symmetric binary coloring
that is consistent and complete w.r.t. \textit{val}.
Also, by Lemma~\ref{lemma-color-wsb},
$b_\tau(\chi(\tau))$ has no monochromatic simplexes of dimension $n-1$.
Therefore, $\langle \chi^\ell(\sigma), \{0,1\}, val \rangle$
is locally solvable in one round.
\end{proof}

\subsection{{From Weak Symmetry Breaking to $(2n-2)$-Renaming}}

{We} use the known reduction {between} weak symmetry breaking
{and} $(2n-2)$-renaming and the results in the previous subsection
to derive $(2n-2)$-renaming valency tasks that are locally solvable.

In the reduction from WSB to $(2n-2)$-renaming~\cite{GRH06},
{called ${\cal A}$ below,}
all processes first invoke an instance of weak symmetry breaking,
then the processes that obtain $b \in \{0,1\}$ invoke an independent
renaming protocol in the IIS model to obtain their output names.
{Our main observation is that} the input of each process in $\cal A$
is a binary value (together with its ID),
representing the value the process gets from weak symmetry breaking.
From this perspective,
the protocol complex of $\cal A$ is nothing else than a subdivision,
for any input simplex (i.e. a binary process assignment).
The high-level idea of our proof is to ``glue'' the local solutions of
the weak symmetry breaking valency tasks in Definition~\ref{def:locSolvWSB}
and the protocol complex of $\cal A$
to derive locally solvable valency tasks for $(2n-2)$-renaming.

{We now describe the reduction in more detail} and prove a result
that will be useful for deriving the renaming valency tasks.

{In $\cal A$, the processes that obtain $b \in \{0,1\}$ from weak
symmetry breaking invoke an independent instance of \emph{adaptive
$(2p-1)$-renaming} in the IIS model, e.g.~\cite{AttiyaWelchBook, BorowskyG1993};
we call this instance ${\cal R}_b$.
A process obtaining name $x$ from ${\cal R}_1$ decides on name $x$, while
a process obtaining name $x$ from ${\cal R}_0$ decides on name $2n-1-x$.
An adaptive renaming ensures that the size of the output name space
depends on the number of processes participating in the algorithm.
Specifically, a $(2p-1)$-adaptive renaming ensures that if $p$ processes
participate in ${\cal R}_b$, then a process returns a unique name in
$\{1, 2, \hdots, 2p-1\}$.
Weak symmetry breaking guarantees that not all processes return with
the same binary output, implying that
$p \in \{1, 2, \hdots, n-1\}$ processes participate ${\cal R}_1$ and
$n-p \in \{1, 2, \hdots, n-1\}$ processes participate ${\cal R}_0$.
It follows that the processes participating in ${\cal R}_1$
decide names in the range $\{1, 2, \hdots, 2p-1\}$,
while the processes participating in ${\cal R}_0$ decide names in the range
$\{2n-1-1, 2n-1-2, \hdots, 2n-1-(2(n-p)-1)\} = \{2n-2, 2n-3, \hdots, 2p\}$.
Therefore, $\cal A$ solves $(2n-2)$-renaming.

The next claim relies on the completeness of any IIS algorithm}
for $(2p-1)$-adaptive renaming:

\begin{claim}
\label{claim-complete}
Let $\cal R$ be {an  $(2p-1)$-adaptive renaming algorithm in
any wait-free read/write model} for $n$ processes.
For every $p \in \{1, \hdots, n\}$ and $d \in \{1, 2, \hdots, 2p-1\}$,
there is an execution of $\cal R$ with $p$ participating processes
in which a process decides name $d$.
\end{claim}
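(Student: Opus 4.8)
The plan is to read the claim as a \emph{completeness} statement for any adaptive renaming algorithm: for every $p \le n$ and every $d \in \{1,\dots,2p-1\}$, the name $d$ is decided by some process in some execution of $\cal R$ with exactly $p$ participants. First I would prove a \emph{monotonicity} reduction. Suppose an execution $E$ of $\cal R$ with participant set $Q$, $|Q| = q$, decides name $d$; pick any process $r \notin Q$ and let $E'$ be the execution that runs $E$ to completion and then runs $r$ alone through the $R$ clean memories $M[1], \dots, M[R]$. In the IIS model the only legality constraint on a concurrency class is that all of its processes have performed equally many immediate snapshots before it, so appending $r$'s solo rounds after all of $E$ produces a legal execution; moreover each write of $r$ to $M[j]$ happens after every read of $M[j]$ by a process of $Q$, so the processes of $Q$ keep exactly their $E$-views and decide as in $E$. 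Hence $d$ is decided with $q+1$ participants, and, iterating, with any $p$ with $q \le p \le n$. Consequently it suffices to decide $d$ with the least admissible participation level $q_d := \lceil (d+1)/2 \rceil$ (which lies in $\{1,\dots,n\}$); a one-line computation shows $d \in \{2q_d-1,\, 2q_d-2\}$, so $d$ is one of the two largest names admissible at level $q_d$. Thus the claim is equivalent to: for every $q$ with $1 \le q \le n$, both names $2q-1$ and $2q-2$ are decided in some $q$-participant execution of $\cal R$ (for $q = 1$ only $2q-1 = 1$ is relevant, and the solo execution of any process decides $1$).

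I would then prove this equivalent statement by induction on $q$, the base case $q=1$ being the remark just made. Assume the statement for all values below $q$; together with the monotonicity reduction this already shows that every name in $\{1, \dots, 2q-3\}$ is decided at level $q$, and likewise at level $q-1$. Suppose, for contradiction, that at least one of $2q-1,\ 2q-2$ is \emph{not} decided at level $q$. Fix a $q$-set $Q$ and restrict $\cal R$ to executions in which exactly the processes of $Q$ participate: this is a wait-free algorithm ${\cal R}_Q$ for $q$ processes; its decisions miss one of the two top names, so after relabelling the used names onto an initial segment it is a $(2q-2)$-renaming for $q$ processes, and --- since its $(q-1)$-participant sub-executions are $(q-1)$-participant executions of $\cal R$ --- the induction hypothesis says that on the boundary of its protocol complex (one copy of the $(q-1)$-participant complex per $(q-1)$-subset of $Q$) every name in $\{1,\dots,2q-3\}$ is realized. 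So ${\cal R}_Q$ solves $(2q-2)$-renaming for $q$ processes with a \emph{complete} boundary, and I would contradict this through the renaming lower bound (e.g.,~\cite{HS99,CR10}) in the following form: the boundary-determined topological index that certifies the impossibility of $(2q-2)$-renaming is forced to be non-trivial by a complete boundary, which is incompatible with a coloring whose image is an initial segment of size $2q-2$.

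The main obstacle is making this last step work uniformly in $q$. Plain $(2q-2)$-renaming of $q$ processes is \emph{solvable} whenever $q$ is not a prime power, so the contradiction cannot be a bare invocation of the Herlihy--Shavit bound; the argument has to genuinely use the data supplied by the induction --- the completeness of the boundary, and the fact that $\cal R$ is simultaneously $(2q'-1)$-adaptive at every level $q' < q$ --- to recover a genuine impossibility. Formulating and proving this ``complete-boundary'' strengthening of the renaming lower bound, in particular handling the non-prime-power values of $q$, is the delicate part; by contrast the IIS-legality of the appended solo rounds, the monotonicity reduction, and the arithmetic that places $d$ among the two top names at level $q_d$ are all routine.
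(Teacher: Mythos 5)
You correctly prove the monotonicity reduction (which the paper uses only implicitly when it asserts that $d\neq 1$), and the arithmetic that places $d$ among the two top names at the critical level $n' = \lceil (d+1)/2\rceil$ is exactly the paper's ``$d \in \{2n'-2, 2n'-1\}$'' step. However, your proposal leaves a genuine gap at the decisive point, and you say so yourself: having reduced to a missing top name at level $q=n'$, you observe that plain $(2q-2)$-renaming of $q$ processes is wait-free solvable whenever $q$ is not a prime power, so no ``bare'' renaming lower bound closes the argument, and you do not supply the ``complete-boundary strengthening'' you ask for.

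The paper closes precisely this gap without induction and without any prime-power case analysis, by packaging the data you collected across levels into one adaptive task and invoking a known equivalence. If $2n'-1$ is never decided at level $n'$, then (using monotonicity to get the name missing at the critical level, and relabelling to handle $d=2n'-2$) the restriction of $\cal R$ to $n'$ processes is a $(2p'-\lceil p'/(n'-1)\rceil)$-adaptive renaming algorithm: it is $(2p'-1)$-adaptive for $p'<n'$ and achieves $\{1,\dots,2n'-2\}$ at $p'=n'$. By Gafni--Most\'efaoui--Raynal--Travers~\cite{GMRT09}, this particular adaptive renaming task is equivalent to $(n'-1)$-set agreement among $n'$ processes, which is not wait-free solvable~\cite{BG93b,HS99,SaksZ2000}; contradiction. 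This is exactly the ``complete-boundary'' phenomenon you are looking for: the adaptivity constraint at all sublevels $p'<n'$, together with the tightened constraint at the top, is the boundary information, and the set-agreement equivalence is what converts it into an impossibility valid for all $n'$, prime power or not. So your decomposition and your diagnosis of the difficulty are right, but the crucial lemma is missing; the fix is to replace your hypothetical strengthened Sperner/index argument with the reduction to set agreement via~\cite{GMRT09}.
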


\begin{proof}
{Assume, by way of contradiction, that} there are
$p \in \{1, \hdots, n\}$ and $d \in \{1, 2, \hdots, 2p-1\}$ such that
in all executions of $\cal R$ with $p$ participating processes,
no process decides name $d$.
Note that in every solo execution only one process participate
and hence it decides~$1$.
Thus, $d \neq 1$.

Let $n'$ be the smallest integer such that $d \leq 2n'-1$.
Observe that $n' \leq p$ and $d \in \{2n'-2, 2n'-1\}$.
Without loss of generality, suppose that $d = 2n'-1$.
Thus, in every execution of $\cal A$ with $p' < n'$ participating processes,
processes decide names in $\{1, 2, \hdots, 2p'-1\}$,
while in every execution of $\cal A$ with $n'$ participating processes,
$\{1, 2, \hdots, 2p'-2\}$.
{Namely, the algorithm solves $(2p'-\lceil \frac{p'}{n'-1} \rceil)$-adaptive renaming.}
{Since {$(2p'-\lceil \frac{p'}{n'-1} \rceil)$-adaptive renaming} is equivalent to $(n-1)$-set
agreement~\cite{GMRT09}, which is not solvable
in any wait-free read/write model~\cite{BG93b,HS99,SaksZ2000},
it follows that {$(2p'-\lceil \frac{p'}{n'-1} \rceil)$-adaptive renaming} is not solvable in any wait-free
read/write model, which is a contradiction.}
\end{proof}

{We now} derive $(2n-2)$-renaming valency tasks that are locally solvable.

Fix {two integers} $n \geq 3$ and $\ell \geq 2$,
and consider the weak symmetry breaking valency task
${\cal T}=\langle \chi^{\ell}(\sigma), \{0,1\}, val \rangle$
(Definition~\ref{def:locSolvWSB}).
Let $\tau$ be any $dim(\sigma)$-dimensional simplex of $\chi^{\ell}(\sigma)$.
By Theorem~\ref{theo-colorings-wsb}, the valency task is locally solvable in one round, and hence
there is a symmetric binary decision map $b_\tau: \chi^{\ell+1}(\sigma) \rightarrow \{0,1\}$
such that $b_\tau(\chi(\tau))$ does not have
monochromatic simplexes of dimension $dim(\sigma)$.
Below, we consider the map  $b_\tau$ defined in the proof of Theorem~\ref{theo-colorings-wsb}.

Observe that any simplex $\rho \in \chi(\tau) \subset \chi^{\ell+1}(\sigma)$
is an input simplex for $\cal A$ since each process
$v \in \rho$ has binary input $b_\tau(v)$.
As already explained,
the protocol complex of $\cal A$ with input complex $\chi(\tau)$ is
$\chi^m(\chi(\tau)) = \chi^{m+1}(\tau) \subset \chi^{\ell+m+1}(\sigma)$,
where $m$ is the number of rounds $\cal A$ runs to complete.\footnote{
    Since $\chi(\tau)$ is finite, we can assume, without loss of generality,
    that all processes execute exactly $m$ rounds and then stop.}
Let $c_\tau$ denote the decision map of $\cal A$ on $\chi^{m+1}(\tau)$.
Then, for every face $\gamma$ of $\tau$,
$c_\tau(\chi^{m+1}(\gamma))$ denotes the decision set of $\cal A$
in the subcomplex  $\chi^{m+1}(\gamma) \subseteq \chi^{m+1}(\tau)$;
namely, $d \in c_\tau(\chi^{m+1}(\gamma))$ if and only if
there is a vertex $v \in \chi^{m+1}(\gamma)$ with $c_\tau(v) = d$.

\begin{lemma}
The following holds:
\begin{enumerate}
\item For every $dim(\sigma)$-dimensional $\tau \in \chi^{\ell}(\sigma)$,
$c_\tau(\chi^{m+1}(\tau)) \subseteq \{1, 2, \hdots, 2n-2\}$.

\item For every $dim(\sigma)$-dimensional $\tau \in \chi^{\ell}(\sigma)$,
$c_\tau(\chi^{m+1}(\gamma)) \subseteq c_{\tau}(\chi^{m+1}(\gamma'))$,
where $\gamma, \gamma'$ are faces of $\tau$ such that $\gamma \subset \gamma'$.

\item For every $\gamma \in \chi^{\ell}(\sigma)$, $c_\tau(\chi^{m+1}(\gamma)) = c_{\tau'}(\chi^{m+1}(\gamma))$,
where $\tau, \tau' \in \chi^{\ell}(\sigma)$ are both of dimension $dim(\sigma)$ and $\gamma$ is face of each of them.

\item For every $\gamma, \gamma' \in \chi^{\ell}(\sigma)$ of same dimension,
$c_\tau(\chi^{m+1}(\gamma)) = c_{\tau'}(\chi^{m+1}(\gamma'))$,
where $\tau, \tau \in \chi^{\ell}(\sigma)$ are both of dimension $dim(\sigma)$ and
$\gamma$ is face of $\tau$ and $\gamma'$ is face of $\gamma'$.
\end{enumerate}
\end{lemma}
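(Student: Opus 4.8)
The plan is to reduce all four items to a single explicit computation: for every face $\gamma$ of an $(n-1)$-dimensional simplex $\tau\in\chi^\ell(\sigma)$,
\[
c_\tau(\chi^{m+1}(\gamma))=
\begin{cases}
\{1,\dots,2|\gamma|-1\} & \text{if } dim(\gamma)\le n-3,\\
\{1,\dots,2n-2\} & \text{if } dim(\gamma)\in\{n-2,n-1\}.
\end{cases}
\]
Granting this, item (1) is the special case $\gamma=\tau$; item (2) is immediate since $\gamma\subset\gamma'$ implies the subcomplex inclusion $\chi^{m+1}(\gamma)\subseteq\chi^{m+1}(\gamma')$, so the decision sets are nested; and items (3) and (4) hold because the right-hand side depends only on $dim(\gamma)=dim(\gamma')$.

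First I would record two structural facts. By the proof of Theorem~\ref{theo-colorings-wsb}, $b_\tau$ restricted to $\chi(\tau)$ is exactly the coloring produced by Lemma~\ref{lemma-color-wsb} (the two extension steps only change $b_\tau$ on vertices outside $\chi(\tau)$, or on $(n-2)$-simplexes that are not faces of $\tau$). Hence, for any face $\gamma$ of $\tau$, the coloring $b_\tau|_{\chi(\gamma)}$ is determined by which face of $\sigma$ the set $ids(\gamma)$ is. Moreover, since ${\cal A}$ is deterministic and ${\cal R}_0,{\cal R}_1$ run on disjoint process sets over independent memory, the decision map $c_\tau$ restricted to $\chi^{m+1}(\gamma)=\chi^m(\chi(\gamma))$ is a function of $b_\tau|_{\chi(\gamma)}$ only. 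This already yields item (3) (if $\gamma$ is a face of both $\tau$ and $\tau'$ then $ids(\gamma)$ is the special $(n-2)$-face of $\sigma$ in the sense of Lemma~\ref{lemma-color-wsb} for $\tau$ iff it is for $\tau'$, so $b_\tau|_{\chi(\gamma)}=b_{\tau'}|_{\chi(\gamma)}$), and reduces item (4) and the displayed formula to computing the decision set from the coloring type.

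Next I would carry out that computation using Claim~\ref{claim-complete} and the remapping of ${\cal R}_0$'s outputs ($x\mapsto 2n-1-x$). If $dim(\gamma)\le n-3$, Lemma~\ref{lemma-color-wsb}(1) gives $b_\tau(\chi(\gamma))=\{1\}$, so every execution lying in $\chi^{m+1}(\gamma)$ is an execution of ${\cal R}_1$ with at most $|\gamma|$ participants; Claim~\ref{claim-complete} applied to ${\cal R}_1$ makes every name in $\{1,\dots,2|\gamma|-1\}$ appear, and no other name can. If $dim(\gamma)=n-2$ then $ids(\gamma)=\rho_i$ for some $i$: when $i$ is the special index, $\chi(\gamma)$ contains a $0$-monochromatic central $(n-2)$-simplex (all $n-1$ processes run ${\cal R}_0$, yielding $\{2,\dots,2n-2\}$ after remapping) and, since its boundary is $1$-colored, an execution with a participant of ${\cal R}_1$ (yielding the name $1$); when $i$ is not special, $\chi(\gamma)$ contains a $1$-monochromatic $(n-2)$-simplex (giving $\{1,\dots,2n-3\}$ from ${\cal R}_1$) and a central $(n-2)$-simplex with $n-2$ zero-colored vertices (giving $\{4,\dots,2n-2\}$ from ${\cal R}_0$ with $n-2$ participants, after remapping); in each sub-case the union is $\{1,\dots,2n-2\}$, and the $(2n-2)$-renaming property of ${\cal A}$ supplies the reverse inclusion. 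The case $dim(\gamma)=n-1$ (so $\gamma=\tau$) is analogous: from the proof of Lemma~\ref{lemma-color-wsb}, $\chi(\tau)$ contains an $(n-1)$-simplex with exactly one $0$-colored vertex and another with exactly one $1$-colored vertex (glue a monochromatic $(n-2)$-simplex of the appropriate $\chi(\rho_i)$ to the matching vertex of the central simplex $\lambda$), realizing $n-1$ participants in ${\cal R}_1$ and $n-1$ in ${\cal R}_0$, and again $\{1,\dots,2n-3\}$ and $\{2,\dots,2n-2\}$ cover $\{1,\dots,2n-2\}$.

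The main obstacle is in the cases $dim(\gamma)\in\{n-2,n-1\}$: one must certify that the extreme participation counts --- $n-1$ participants in ${\cal R}_1$, and $n-1$ (resp.\ $n-2$) participants in ${\cal R}_0$ --- are genuinely attained by sub-simplexes of $\chi(\tau)$ that correspond to executions of ${\cal A}$, which forces one to use the precise description, from the proof of Lemma~\ref{lemma-color-wsb}, of where the $0$- and $1$-monochromatic $(n-2)$-simplexes sit inside each $\chi(\rho_i)$ and how these fit inside $\chi(\tau)$. The remaining ingredients (monotonicity of the subdivision for item (2), determinism of ${\cal A}$ for item (3), the output remapping, and the $(2n-2)$-renaming upper bound) are routine.
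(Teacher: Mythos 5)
Your proof is correct and rests on exactly the same ingredients the paper uses (the determinism of $\cal A$, Claim~\ref{claim-complete}, the remapping $x\mapsto 2n-1-x$ for ${\cal R}_0$, and the concrete shape of the Lemma~\ref{lemma-color-wsb} coloring), but it is organized differently. The paper proves items (3) and (4) by a dimension-by-dimension case analysis, arguing for $\dim(\gamma)\le n-3$ that the input complexes to $\cal A$ coincide (so the decision sets coincide by determinism), and for $\dim(\gamma)\in\{n-2,n-1\}$ that each side separately fills up $\{1,\dots,2n-2\}$. You instead first establish a closed-form description of $c_\tau(\chi^{m+1}(\gamma))$ as a function of $\dim(\gamma)$ alone, and then read off all four items as corollaries; your observation that $b_\tau|_{\chi(\gamma)}$ depends only on $ids(\gamma)$, not on $\tau$, gives item (3) even more directly than the explicit formula would. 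A further small difference lies in the witnessing simplexes for $\dim(\gamma)=n-2$: the paper uses an $(n-2)$-simplex with a $d$-monochromatic $(n-3)$-face, producing the union $\{1,\dots,2n-5\}\cup\{4,\dots,2n-2\}$, which covers $\{1,\dots,2n-2\}$ only for $n\ge 4$ (for $n=3$ it is $\{1,4\}$); you instead use a fully $1$-monochromatic $(n-2)$-simplex (available for $i\ne 0$) and the central $0$-heavy simplex, giving $\{1,\dots,2n-3\}\cup\{4,\dots,2n-2\}$, which handles $n=3$ as well, and you correctly treat the special $i=0$ face separately. So your formula-based route is a clean, and arguably slightly more careful, reorganization of the paper's argument rather than a fundamentally different proof.
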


\begin{proof}
The first two claims are {immediate}.
Since $\cal A$ solves $(2n-2)$-renaming,
all outputs names are in $\{1, 2, \hdots, 2n-2\}$,
and hence $c_\tau(\chi^{m+1}(\tau)) \subseteq \{1, 2, \hdots, 2n-2\}$.
Also, because $\gamma \subset \gamma'$, we have $\chi^{m+1}(\gamma) \subset \chi^{m+1}(\gamma')$, and then
$c_\tau(\chi^{m+1}(\gamma)) \subseteq c_{\tau}(\chi^{m+1}(\gamma'))$.

For the next two claims, consider first the case $dim(\gamma), dim(\gamma') \leq n-3$.
By Definition~\ref{def:locSolvWSB}, $val(\gamma) = val(\gamma') = \{1\}$.
For the third claim, we have that $b_\tau$ and $b_{\tau'}$ are consistent and complete w.r.t. $val$.
Thus, all simplexes in $b_\tau(\chi(\gamma))$ and $b_{\tau'}(\chi(\gamma))$ are 1-monochromatic,
and hence both denote the same input complex to $\cal A$, which then implies that
$c_\tau(\chi^{m+1}(\gamma)) = c_{\tau'}(\chi^{m+1}(\gamma))$.
{A similar argument proves the fourth claim}.

We {prove} the third and fourth claims {in the remaining cases,
where} $n-2 \leq dim(\gamma), dim(\gamma') \leq n-1$.

Consider first the case $dim(\gamma), dim(\gamma') = n-1$.
For the third claim we have $\gamma = \tau = \tau'$.
Then, $b_\tau = b_{\tau'}$, and consequently $b_\tau(\chi(\gamma))$ and $b_{\tau'}(\chi(\gamma))$
are the same input complex to $\cal A$, which then implies that
$c_\tau(\chi^{m+1}(\gamma)) = c_{\tau'}(\chi^{m+1}(\gamma))$.
For the fourth claim, we have $\gamma = \tau$ and $\gamma' = \tau'$.
Since $b_\tau$ is consisten and complete w.r.t. $val$ and locally solves weak symmetry breaking,
$b_\tau(\chi(\gamma))$ has a bi-chromatic  $(n-1)$-simplex $\lambda$.
Let $1 \leq p \leq n-1$ denote the number of processes with color 1 in $\lambda$.
As explained above, in every execution of $\cal A$ with input $\lambda$, those $p$
processes invoke ${\cal R}_1$ to output names in range $\{1, 2, \hdots, 2p-1\}$ while the remaining $n-p$ processes
invoke ${\cal R}_0$ output names in range $\{2p, 2p+1, \hdots, 2n-2\}$.
By Claim~\ref{claim-complete}, for every $d_1 \in \{1, 2, \hdots, 2p-1\}$ and $d_0 \in \{2p, 2p+1, \hdots, 2n-2\}$,
among all possible executions of $\cal A$ with input $\lambda$,
there is one in which a process decides $d_1$ and there is one in which a process decides $d_0$.
Therefore, $c_\tau(\chi^{m+1}(\gamma)) = \{1, 2, \hdots, 2n-2\}$.
A symmetric argument shows that  $c_{\tau'}(\chi^{m+1}(\gamma')) = \{1, 2, \hdots, 2n-2\}$.

Finally, we consider the case $dim(\gamma), dim(\gamma') = n-2$.
{In this case,} we use the properties of the local solutions
$b_\tau$ and $b_{\tau'}$ in the proof of Theorem~\ref{theo-colorings-wsb}.

For the third claim, observe that $b_\tau(\chi(\tau))$ is a coloring obtained from Lemma~\ref{lemma-color-wsb}.
The proof of that lemma shows for each $d \in \{0,1\}$, it is always the case that
$b_\tau(\chi(\gamma))$ has a $(n-2)$-simplex $\lambda_d$ with a $d$-monochromatic $(n-3)$-face.
In every execution of $\cal A$ with input $\lambda_1$, the $n-2$ processes of $\lambda_1$ with input $1$
output a name in range $\{1, 2, \hdots, 2n-5\}$.
By Claim~\ref{claim-complete}, for every $d \in \{1, 2, \hdots, 2n-5\}$,
there is an execution of $\cal A$ with input $\lambda_1$ in which a process decides $d$.
In a similar way, we can argue that for every $d \in \{4, 5, \hdots, 2n-2\}$,
there is an execution of $\cal A$ with input $\lambda_0$ in which a process decides $d$.
Thus, $c_\tau(\chi^{m+1}(\gamma)) = \{1, 2, \hdots, 2n-2\}$.
A symmetric argument shows that  $c_{\tau'}(\chi^{m+1}(\gamma)) = \{1, 2, \hdots, 2n-2\}$.

For the fourth claim, observe that the argument above already shows that
$c_\tau(\chi^{m+1}(\gamma)) = \{1, 2, \hdots, 2n-2\}$, and the same argument
with $\lambda'$ and $\tau'$ gives that $c_{\tau'}(\chi^{m+1}(\gamma')) = \{1, 2, \hdots, 2n-2\}$.
The lemma follows.
\end{proof}

A direct consequence of the previous lemma that the following is a well-defined valency task for $(2n-2)$-renaming:
${\cal R}_{\cal T}=\langle \chi^{\ell}(\sigma), \{1,2, \hdots, 2n-2\}, val \rangle$ with
$val(\gamma) = c_\tau(\chi^{m+1}(\gamma))$, for each $\gamma \in \chi^{\ell}(\sigma)$,
where $\tau$ is any $dim(\sigma)$-dimensional simplex of $\chi^{\ell}(\sigma)$
containing $\gamma$.

We now argue that ${\cal R}_{\cal T}$ is locally solvable in
$m+1$ rounds.
%Roughly speaking, we use $\cal A$'s decision function to solve the valency task (which was used to define the valency task).
Consider any $dim(\sigma)$-dimensional input simplex $\tau \in \chi^{\ell}(\sigma)$ of ${\cal R}_{\cal T}$.
Recall that $c_\tau$ denotes  $\cal A$'s decision map on $\chi^{m+1}(\tau) \subset \chi^{\ell+m+1}(\sigma)$.
Since $\cal A$ solves $(2n-2)$-renaming, all decisions in $c_\tau(\chi^{m+1}(\tau))$ are correct, i.e.,
output names are in the range $\{1, 2, \hdots, 2n-2\}$ and no pair of vertices in the same simplex decide
the same output name.
Furthermore, $c_\tau$ is consistent and complete w.r.t. the valencies in $val|_\tau$,
since $val(\gamma') = c_\tau(\chi^{m+1}(\gamma'))$, by definition of $val$.
To conclude the proof,
we extend $c_\tau$ to all $\chi^{\ell+m+1}(\sigma)$, ensuring that
it is symmetric and consistent and complete w.r.t. $val$
(in similar way this is done in the proof of Theorem~\ref{theo-colorings-wsb} for weak symmetry breaking).
Therefore, ${\cal R}_{\cal T}$ is locally solvable, since the construction is independent of $\tau$.

\begin{theorem}
\label{theo-colorings-ren}
For any $n\geq 3$,
there are locally solvable valency tasks for $(2n-2)$-renaming.
\end{theorem}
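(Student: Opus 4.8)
\textbf{Proof proposal for Theorem~\ref{theo-colorings-ren}.}
The plan is to show that the task ${\cal R}_{\cal T}$ exhibited just above the theorem statement is well-defined and locally solvable, which immediately yields the existence claim. By the lemma preceding the theorem, the four properties of the maps $c_\tau(\chi^{m+1}(\cdot))$ guarantee precisely what is needed to glue them into a single carrier map $val$: part~(1) gives outputs in $\{1,\dots,2n-2\}$, part~(2) gives monotonicity along faces of a fixed $\tau$, part~(3) shows the value is independent of which top-dimensional $\tau$ we route a face $\gamma$ through, and part~(4) gives the symmetry on the boundary required of a renaming valency task. Thus ${\cal R}_{\cal T}=\langle \chi^{\ell}(\sigma), \{1,\dots,2n-2\}, val \rangle$ with $val(\gamma)=c_\tau(\chi^{m+1}(\gamma))$ is a legitimate valency task for $(2n-2)$-renaming.

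Next I would verify local solvability in $m+1$ rounds. Fix a top-dimensional input simplex $\tau\in\chi^{\ell}(\sigma)$. The decision map $c_\tau$ of the reduction algorithm $\cal A$, run on the binary input complex $\chi(\tau)$ (where process $v$ has input $b_\tau(v)$), lives on $\chi^{m+1}(\tau)\subset\chi^{\ell+m+1}(\sigma)$, uses names only in $\{1,\dots,2n-2\}$, and on any simplex assigns distinct names (since $\cal A$ solves $(2n-2)$-renaming). By construction of $val$, on the faces of $\tau$ this map is exactly consistent and complete with $val$. It remains to extend $c_\tau$ to all of $\chi^{\ell+m+1}(\sigma)$ while preserving consistency, completeness, and symmetry on the boundary. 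This is done exactly as in the proof of Theorem~\ref{theo-colorings-wsb}: first propagate the coloring of the part of $\chi(\tau)$ that touches the boundary of $\chi^{\ell+1}(\sigma)$ to its order-preserving symmetric copies, colouring the remaining boundary-external vertices by a canonical ``default'' name (the analogue of colouring with $1$ in the WSB proof, e.g. route through the solo run of ${\cal R}_1$), then fix up faces whose current decision set is a proper subset of $val$ by recolouring a smallest-id internal vertex of the relevant central simplex — here one must recolour to realize each missing name, which is legitimate precisely because Claim~\ref{claim-complete} guarantees such executions of $\cal A$ exist. Since the resulting construction never depends on $\tau$ outside $\chi^{m+1}(\tau)$ itself and always produces a symmetric, consistent, complete renaming colouring, ${\cal R}_{\cal T}$ is locally solvable, and the theorem follows.

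I expect the main obstacle to be the extension step: unlike the binary WSB case, where ``default to $1$'' trivially lies in every valency, here the default name and the fix-up recolourings must be chosen so that (i) every simplex still carries distinct names, (ii) symmetry on the boundary is not broken, and (iii) on every input face $\gamma$ the realized decision set is exactly $val(\gamma)=c_\tau(\chi^{m+1}(\gamma))$. Point (iii) is where Claim~\ref{claim-complete} does the real work — it is what lets us certify that each promised name is actually decidable by an execution of $\cal A$ — and point (i) requires a little care because introducing a name at an internal vertex could collide with an adjacent decision; choosing the vertex deep in a central simplex of $\chi^{m+1}(\gamma)$ (so it is adjacent only to vertices already decided under the adaptive renaming on that sub-execution) resolves this. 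The symmetry bookkeeping in (ii) is routine given $\ell\geq 2$, exactly as argued in the proof of Theorem~\ref{theo-colorings-wsb}, since $\chi(\tau)$ can touch at most one boundary face of each dimension.
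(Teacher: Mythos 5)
Your proposal follows the paper's proof essentially step for step: the reduction $\cal A$, the four-part lemma establishing that $val(\gamma)=c_\tau(\chi^{m+1}(\gamma))$ is a well-defined symmetric carrier map, the observation that $c_\tau$ is automatically consistent and complete on $\chi^{m+1}(\tau)$ because $\cal A$ genuinely solves $(2n-2)$-renaming there, and a deferral of the global extension to an analogue of the extension in the proof of Theorem~\ref{theo-colorings-wsb}. One caution on your elaboration of the extension step: you describe directly assigning a ``default name'' and recolouring renaming outputs at internal vertices, which is what triggers the distinct-name bookkeeping you flag as point (i); the cleaner reading of the paper's ``in similar way this is done in the proof of Theorem~\ref{theo-colorings-wsb}'' is to perform the extension entirely at the \emph{binary} level — extend $b_\tau$ to a symmetric, consistent, complete binary coloring of all of $\chi^{\ell+1}(\sigma)$ exactly as in that proof, and only then apply $\cal A$ to the resulting binary input complex — so that distinctness within simplices and boundary symmetry are inherited from $\cal A$ being a correct comparison-based renaming protocol, while Claim~\ref{claim-complete} supplies completeness on the $(n-2)$- and $(n-1)$-dimensional faces. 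The paper itself is terse here, so your level of detail is comparable to its, but mixing the binary-input level with the name-output level in your sketch risks obscuring that the obstacles you list are meant to be absorbed by $\cal A$ rather than handled by hand.
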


\section{Local Valency Impossibility Proofs}
\label{sec:locValImpProof}

Here we make precise our notion of ``impossibility proof in the FLP style,''
and use Theorems~\ref{theo-colorings-sa} and~\ref{theo-colorings-wsb}
to argue that such impossibility proofs do not exist for
$(n-1)$-set agreement and weak symmetry breaking in the IIS model.

In a \emph{local valency impossibility proof} for say, set agreement,
one assumes by way of contradiction a hypothetical $R$-round protocol
solving the task. Recall that the protocol determines valencies,
for all simplexes in all rounds, starting with those of the initial configuration $\sigma$.
The valencies must respect the task specification, since we asume the protocol solves the task.
For example, $val(P_i,i)=\{ i\}$, where $(p_i,i)\in\sigma$ is the initial state of $P_i$ (in an execution where $P_i$
sees only itself, it must decide its own input value).
A crucial observation is that what we are given in a local valency impossibility proof are
only the valencies, and there are many protocols that could produce the same valencies (i.e.,
many simplicial  maps $c$ assigning decisions to $\chi^R(\sigma)$, yielding the same valencies).

The proof consists of $R-1$ phases to select  a sequence of simplexes
$\sigma_0,\sigma_1,\ldots,\sigma_{R-1}$, starting with $\sigma_0=\sigma$ and such that
$\sigma_\ell \in\chi^\ell(\sigma)$ for all $1 \leq \ell \leq R-1$, extending the sequence by one
at each phase.

Assume we have selected the sequence $\sigma_0,\ldots,\sigma_\ell$, for some $\ell\geq 1$.
To select $\sigma_{\ell+1}\in\chi(\sigma_{\ell}) \subset \chi^{\ell+1}(\sigma)$,
one considers all simplexes in $\sigma'\in\chi(\sigma_{\ell})$,
together with their valencies, $val(\sigma')$.
When we reach phase $R-1$, and we have selected $\sigma_{R-1}\in\chi(\sigma_{R-2})$,
the protocol reveals all decisions in $\chi(\sigma_{R-1}) \subset \chi^R(\sigma)$
(and \emph{only} those decisions).
Namely, a simplicial map $c$ assigning a decision to each vertex of $\chi(\sigma_{R-1})$,
respecting all previously observed valencies, namely, all those  in each $\chi(\sigma_\ell)$.

There \emph{is a local valency impossibility proof} for the task
if and only if one can select a sequence
$\sigma_0, \sigma_1, \hdots, \sigma_{R-1}$ such that
the task is \emph{not} locally solvable in one round at $\sigma_{R-1}$.
Namely, if there is no  decision function $c$,
that respects the valencies and is consistent with the task specification.
In the case of set agreement,
at least one simplex must have $n$ different decisions,
for any $c$ that respects the valencies.

Therefore, there is no such a proof if we are able to exhibit valencies
of a hypothetical protocol such that,
for any selection $\sigma_0, \sigma_1, \hdots, \sigma_{R-1}$,
there is a decision function $c$ corresponding to those valencies
that locally solves $(n-1)$-set agreement
(the argument for weak symmetry breaking is analogous):
\begin{itemize}
\item Fix any $R \geq 2$.

\item For the input complex, the valency of each
$\sigma'\subseteq\sigma$ is $val(\sigma') = \sigma'$.

\item In phase $\ell \in \{0, \hdots, R-2\}$, the valency of each simplex $\tau \in \chi(\sigma_\ell) \subset \chi^{\ell+1}(\sigma)$
is the valency of the simplex in the valency task ${\cal T}^{\ell+1}=\langle \chi^{\ell+1}(\sigma), \sigma, val \rangle$
in Definition~\ref{def:locSolvSAtask}, namely, $val(\tau)$.

\item In phase $R-1$, the protocol picks a decision map $c: \chi^R(\sigma) \rightarrow \sigma$ that is consistent
and complete w.r.t  ${\cal T}^{R-1}=\langle \chi^{R-1}(\sigma), \sigma, val \rangle$
and does not have fully colored $(n-1)$-simplexes in $\chi(\sigma_{R-1}) \subset \chi^R(\sigma)$,
and provides only the decisions $c(\chi(\sigma_{R-1}))$.
Such a mapping exists since ${\cal T}^{R-1}$  is $(n-1)$-locally solvable,
due to Theorem~\ref{theo-colorings-sa}.
\end{itemize}

Notice that no matter the simplex $\sigma_\ell$ we chose in each phase,
we cannot find a contradiction in the decisions of $\chi(\sigma_{R-1})$.
The only thing that remains to be argued is that the valencies
are consistent during all phases.
More specifically, valencies preserve containment in the same phase
and can only shrink as the phases go by,
and additionally they do not contradict validity,
i.e., the valency of a simplex is a subset of its carrier.
Thus, for any $R$, there are valencies that could be produced by
a hypothetical set agreement protocol.
This is implied by the three properties below that are satisfied for
every valency task
${\cal T}^{\ell}=\langle \chi^{\ell}(\sigma), \sigma, val \rangle$,
$\ell \in \{1, \hdots, R-1\}$,
and whose proof is based on Observation~\ref{observation-2}.
These properties also show that the decisions of
$\chi^{R-1}(\sigma_{R-1})$ revealed by the protocol are consistent
with all valencies in all phases.

\begin{observation}
\label{observation-2}
For $\ell \geq 0$, for every $\gamma \in \chi^\ell(\sigma)$, 
$ID(\gamma) \subseteq carr(\gamma, \sigma, \chi^\ell)$.
Furthermore, if $dim(carr(\gamma, \sigma, \chi^\ell)) = dim(\gamma)$,
then $carr(\gamma, \sigma, \chi^\ell) = ID(\gamma)$.
\end{observation}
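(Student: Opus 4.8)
\textbf{Proof proposal for Observation~\ref{observation-2}.}
The plan is to proceed by induction on the number of rounds $\ell$, exploiting the basic structural property of the standard chromatic subdivision $\chi$: each simplex $\gamma$ of $\chi(\rho)$ has a well-understood carrier inside $\rho$, and the $ID$-labels of its vertices form a face of that carrier. First I would fix notation: for a simplex $\gamma \in \chi^\ell(\sigma)$, write $ID(\gamma) = ids(\gamma)$ for the set of process identifiers labelling the vertices of $\gamma$, and let $c = carr(\gamma, \sigma, \chi^\ell)$ be its carrier, the minimal face $\sigma' \subseteq \sigma$ with $\gamma \in \chi^\ell(\sigma')$. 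Since every vertex of $\chi^\ell(\sigma)$ carries the identifier of some process that has taken a step, and such a process must appear in $ids$ of any face whose iterated subdivision contains $\gamma$, we get $ID(\gamma) \subseteq c$; this already handles the first assertion. (Equivalently, one can argue operationally: $c$ is exactly the set of processes seen in the $\ell$-round IIS execution ending at $\gamma$, and every process whose id labels a vertex of $\gamma$ has in particular been seen.)

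For the base case $\ell = 0$ we have $\chi^0(\sigma) = \sigma$, so $\gamma$ is a face of $\sigma$, its carrier is $\gamma$ itself, and $ID(\gamma) = \gamma = carr(\gamma,\sigma,\chi^0)$; both claims hold trivially, with the dimension hypothesis automatically satisfied. For the inductive step, suppose the statement holds for $\chi^\ell$ and consider $\gamma \in \chi^{\ell+1}(\sigma)$. Write $\gamma \in \chi(\eta)$ for a unique minimal $\eta \in \chi^{\ell}(\sigma)$ — namely $\eta = carr(\gamma, \chi(\cdot)) $ within the last subdivision — and let $c_\eta = carr(\eta, \sigma, \chi^\ell)$. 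The key structural fact about a single chromatic subdivision is that $carr(\gamma, \sigma, \chi^{\ell+1}) = carr(\eta, \sigma, \chi^\ell) = c_\eta$ and that $ID(\gamma) \subseteq ID(\eta)$. Now suppose $dim(c_\eta) = dim(\gamma)$. Since $ID(\gamma) \subseteq ID(\eta) \subseteq c_\eta$ (the last inclusion by the already-established first assertion applied to $\eta$), and $|ID(\gamma)| = dim(\gamma)+1 = dim(c_\eta)+1 = |c_\eta|$, a cardinality count forces $ID(\gamma) = c_\eta = carr(\gamma,\sigma,\chi^{\ell+1})$, which is the second assertion.

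The main obstacle I anticipate is making precise the single-round carrier identity $carr(\gamma,\sigma,\chi^{\ell+1}) = carr(carr(\gamma,\chi),\sigma,\chi^\ell)$ together with $ID(\gamma) \subseteq ID(carr(\gamma,\chi))$; these are standard properties of the chromatic subdivision (see~\cite{HerlihyKR2013}), but one must be careful that ``carrier'' is used consistently at two levels — within the last subdivision and across all $\ell+1$ rounds. Once that compositional property of carriers is stated cleanly, the cardinality argument closing the dimension case is immediate, since $\chi$ being a \emph{chromatic} subdivision guarantees that distinct vertices of $\gamma$ carry distinct identifiers, so $|ID(\gamma)| = |\gamma| = dim(\gamma)+1$ with no collapse. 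An alternative, fully operational route is to track the IIS execution directly: at each round the set of ``seen'' processes can only grow, and a process contributes its id to $ID(\gamma)$ only after being seen, so $ID(\gamma)$ sits inside the seen-set $c_\eta$ at every stage; when the seen-set has exactly as many elements as $\gamma$ has vertices, no process outside $ID(\gamma)$ can have been seen, giving equality. Either presentation yields the observation; I would use the topological one for brevity and cite the carrier composition property rather than re-deriving it.
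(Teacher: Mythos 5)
Your proof is correct, but you should know that the paper does not supply a proof for Observation~\ref{observation-2}: it is stated without argument and treated as a standard property of the chromatic subdivision (indeed, the containment $ids(\tau)\subseteq carr(\tau,\sigma,\chi^\ell)$ is asserted earlier in the paper, in Section~\ref{sec:SAlocalSolving}, precisely with the justification ``since $\chi^\ell(\sigma)$ is a chromatic subdivision of $\sigma$'').

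Your argument is sound, but the induction is doing no real work and could be dropped. You establish the first claim, $ID(\gamma)\subseteq carr(\gamma,\sigma,\chi^\ell)$, directly in your opening paragraph, either by the chromatic-labelling property or operationally via the seen-set; that argument is valid for all $\ell\geq 0$ at once. Given that, the second claim follows from a one-line cardinality count that does not depend on $\ell$ at all: since $\chi^\ell(\sigma)$ is a chromatic complex, the vertices of $\gamma$ carry pairwise distinct identifiers, so $|ID(\gamma)| = dim(\gamma)+1$; if $dim(carr(\gamma,\sigma,\chi^\ell)) = dim(\gamma)$, then $|ID(\gamma)| = |carr(\gamma,\sigma,\chi^\ell)|$ and the inclusion $ID(\gamma)\subseteq carr(\gamma,\sigma,\chi^\ell)$ must be an equality. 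Your inductive step --- relying on the carrier-composition identity $carr(\gamma,\sigma,\chi^{\ell+1}) = carr(\eta,\sigma,\chi^\ell)$ with $\eta$ the one-round carrier, plus $ID(\gamma)\subseteq ID(\eta)$ --- is a legitimate and standard decomposition, but notice that inside the inductive step you end up invoking the already-established first assertion rather than the inductive hypothesis, so the induction scaffold is vestigial. Stripping it away leaves exactly the two facts the paper implicitly relies on, which is the cleaner presentation.
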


\begin{description}
\item[Containment]
For $\tau, \tau' \in \chi^\ell(\sigma)$ with $\tau' \subset \tau$, we have $val(\tau') \subseteq val(\tau)$.
By Observation~\ref{observation-2}, for every $\gamma \in \chi^\ell(\sigma)$,
$ID(\gamma) \subseteq carr(\gamma, \sigma, \chi^\ell)$.
Since $\tau' \subset \tau$, we have $carr(\tau', \sigma, \chi^\ell) \subset carr(\tau, \sigma, \chi^\ell)$.
Depending on the dimension of $\tau$,
$val(\tau)$ is either $ID(\tau)$ or $carr(\tau, \sigma, \chi^\ell)$;
and similarly for $\tau'$.
Therefore, $val(\tau') \subseteq val(\tau)$.

\item[Valencies shrink]
Consider any $m > \ell$ with $0 \leq \ell + m \leq R-1$ and
the valency task ${\cal T}^{\ell+m}=\langle \chi^{\ell+m}(\sigma), \sigma, val' \rangle$.
For $\tau \in \chi^\ell(\sigma)$ and $\tau' \in \chi^m(\tau)$, $val'(\tau') \subseteq val(\tau)$.
The argument is very similar to the previous one.
Since $\tau' \in \chi^m(\tau) \subset \chi^\ell(\sigma)$,
we have that $carr(\tau', \sigma, \chi^{\ell+m}) \subseteq carr(\tau, \sigma, \chi^\ell)$.
Depending on the dimension of $\tau$, $val(\tau)$ is either $ID(\tau)$
or $carr(\tau, \sigma, \chi^\ell)$, and we have that $ID(\tau) \subseteq carr(\tau, \sigma, \chi^\ell)$, by Observation~\ref{observation-2};
and similarly for $\tau'$.
Therefore, $val'(\tau') \subseteq val(\tau)$.

\item[Validity] For $\tau \in \chi^\ell(\sigma)$, we have $val(\tau) \subseteq carr(\tau, \sigma, \chi^\ell)$.
If $dim(\tau)$ is $n-2$ or $n-1$, $val(\tau) = carr(\tau, \sigma, \chi^\ell)$,
and if $dim(\tau) \leq n-3$, $val(\tau) = ID(\tau) \subseteq carr(\tau, \sigma, \chi^\ell)$,
where the last containment follows from Observation~\ref{observation-2}.
\end{description}

The previous properties hold for all simplexes of the valency tasks in Definition~\ref{def:locSolvSAtask},
and thus we conclude that there is no contradiction on the valencies
provided during the phases.

% !TEX root = main.tex

\section{Discussion}
\label{section:them}

This paper argues that the $(n-1)$-set agreement, weak symmetry breaking
and $(2n-2)$-renaming impossibilities in the wait-free read/write
shared memory model cannot be proved using local arguments,
in the style of FLP.
We introduced the notions of \emph{valency task} and \emph{local solvability}
for set agreement and weak symmetry breaking.
We formalized the notion of \emph{local-valency impossibility proof}
for these tasks, where a presumptive  protocol for {these} tasks can
always hide erroneous results, even after committing to valencies one round
before termination.
We showed that there are no local-valency impossibility proofs for $(n-1)$-set agreement
and weak symmetry breaking in the wait-free read/write shared memory model.

Alistarh, Aspnes, Ellen, Gelashvili and Zhu~\cite{AlistarhAEGZ2019}
studied a similar question by defining a game between
a \emph{prover} and a \emph{protocol},
as a way to represent extension-based techniques for proving
impossibility results.
They have shown that, for set agreement,
a protocol can win this game against any prover,
thus showing extension-based techniques do not suffice for proving
the impossibility of solving set agreement.
Their approach is restricted  to \emph{unbounded} protocols.
This also complicates the argument, since they need to work with
\emph{non-uniform} simplicial subdivisions.
In contrast, we consider \emph{bounded wait-free}.
This allows to assume that all processes decide at the same round, $R$
(hence giving more information and power to the prover),
leading to simpler \emph{uniform} subdivisions.
We stress that there is no loss of generality in this assumption,
since a task is wait-free solvable if and only if it is wait-free solvable
by a protocol where all processes decide at the same round.
Furthermore, while Alistarh et al.\ study only $k$-set agreement,
we also investigate weak symmetry breaking,
and by reduction, renaming.

{
%[[I think the material about \cite[Ch. 10]{HerlihyKR2013} noise, 
%since their argument is about the structure of
%the complex, not about the decision map.
%I removed it, for now.]]
%We remark that a local argument \emph{has} been used for set agreement
%and other tasks~\cite[Ch. 10]{HerlihyKR2013},
%and it would be interesting to understand its relation with our approach.
%The argument is based on the notion of \emph{critical state}~\cite{Her91}.
%The idea is to prove the topological connectivity of the protocol complex,
%such as having a path-connected reachable protocol complex,
%which holds in each final configuration also holds in each initial configuration.
%The proof goes by contradiction to show that if
%the property does not hold at the start of the computation,
%then the protocol reaches a critical configuration where the
%property still does not hold,
%but where any further step by any process makes it hold.
%As in the FLP proof, a case analysis of each of the process’s possible next steps is performed,
%together with  a combination of model-specific reasoning and basic topological properties to show that the property of interest must already hold in the critical configuration, a contradiction.
%Recall that the topological property that is obtained, determines task solvability.

Looking forward, one would like to have notions of valency tasks
and local impossibility arguments to be preserved under reductions 
and simulations.
A notion of valency tasks, 
which is robust with respect to implementation relations,
might be feasible for \emph{rendezvous tasks}, 
including set agreement, simplex agreement, loop agreement,
and approximation agreement~\cite{LIU09}.
%In the case of rendezvous tasks,  one type of reduction has been studied in detail.
A rendezvous task is said to implement another if an instance of its 
solution, followed by a wait-free protocol using read/write registers, 
solves the other.
%This classifies rendezvous tasks of every dimension:
Rendezvous tasks are divided into infinitely many, countable classes, 
where two tasks are in the same class if they implement each other~\cite{LIU09}.
Brusse and Ellen~\cite{BrusseE21} studied reductions
for \emph{augmented} extension-based proofs.}

%HA: I don't see how this is relevant
%Another interesting transitivity property,
%considered in~\cite{HSsimRed12},
%is the ability of one model of distributed computation to simulate
%another model with respect to their ability to solve colorless tasks.
%It presents examples of simulations between models such as
%asynchronous read-write memory,  asynchronous
%message-passing, and  asynchronous read-write memory augmented by more powerful
%synchronisation objects.

% that solve (m,k)-set agreement
%sr: removed the following, too specific, and after discussing with Armando, there are subtleties:
%such as approximate agreement.

In the context of randomized and non-deterministic protocols, 
it would also be interesting
to consider valency tasks and local impossibilities.
For example, the {lower bound proof of~\cite{AttiyaC2008} 
is based on valency and has a local nature}.

Another interesting question is \emph{how much} of the final decisions
the protocol can reveal;
for example, revealing consistent decisions even if several configurations
are chosen instead of only one.
Finally, we would like to explore local-valency
proofs beyond our wait-free setting, in
models that are not round-based~\cite{LubMoran95}
or non-compact~\cite{NSW19}, like $t$-resilient models.

\paragraph{Acknowledgments}
We thank Ulrich Schmid and the reviewers  for many helpful comments.
We thank Dan Alistarh, James Aspnes, Faith Ellen, Rati Gelashvili and Leqi Zhu
for helpful conversations.
\bibliographystyle{abbrv}
\bibliography{biblio}

\end{document}